\documentclass[letterpaper]{article}

\usepackage{graphicx}
\usepackage[nolinenums]{custom_arxiv}
\usepackage{amsthm}
\usepackage{amsmath}
\usepackage{amssymb}
\usepackage{thmtools}
\usepackage{graphicx}
\usepackage{bm}

\usepackage{enumitem}
\usepackage{thm-restate}
\usepackage{lipsum}
\usepackage{tikz}
\usepackage{mleftright}
\usepackage{hyperref}
\usepackage[capitalise,noabbrev]{cleveref}
\usepackage{booktabs}
\usepackage[scientific-notation=true]{siunitx}

\usepackage{microtype} 


\newcommand{\emptyseq}{\varnothing}
\newcommand{\infos}[1]{\mathcal{I}_{#1}}
\newcommand{\seqf}[1]{Q_{#1}}
\newcommand{\seqs}[1]{\Sigma_{#1}}
\newcommand{\conn}{\rightleftharpoons}
\newcommand{\plset}{\{1, \dots, n\}}
\newcommand{\xileaf}[2]{\xi_i(#1; #2)}


\renewcommand{\vec}[1]{\bm{#1}}
\newcommand{\mat}[1]{\bm{#1}}
\newcommand{\symp}[1]{\Delta^{\!#1}}

\newcommand{\placepayoffspacefigurehere}{%
\begin{figure*}[ht]%
\includegraphics[scale=.77]{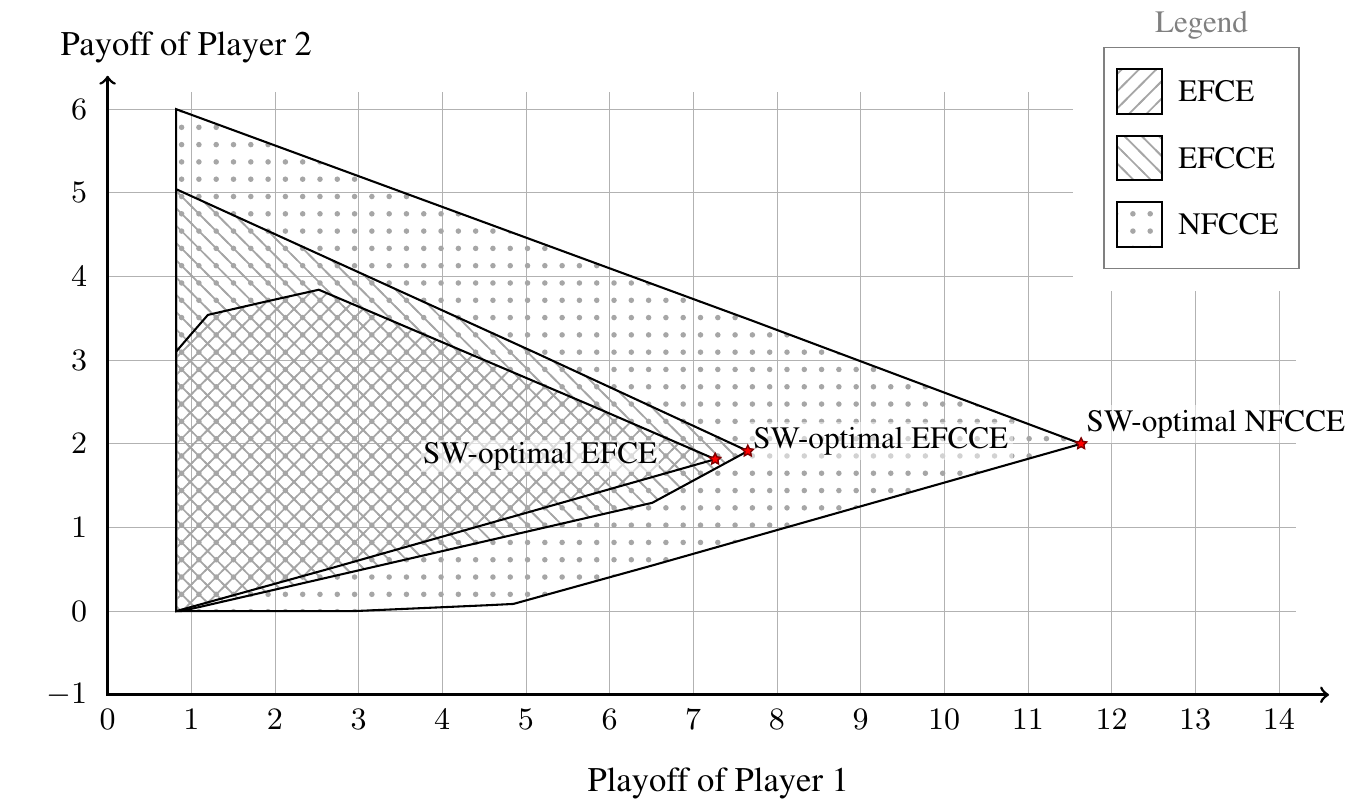}%
\hfill%
\includegraphics[scale=.77]{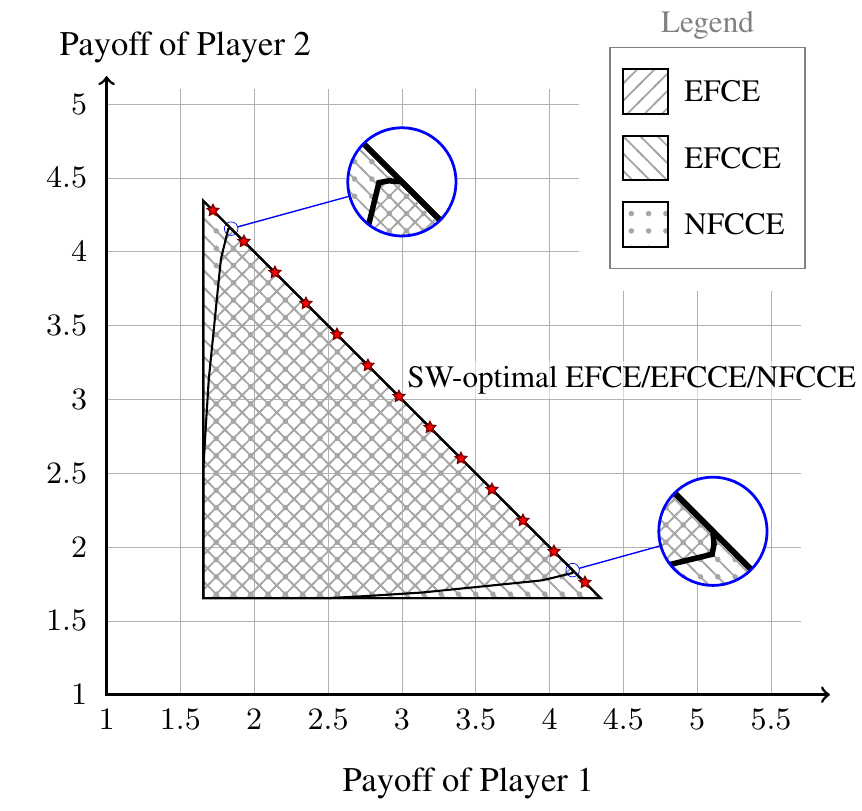}%
    \caption{%
        Space of payoff vectors that can be induced by EFCE, EFCCE and %
        NFCCE in an instance of the Sheriff game (left) and of $3$-card %
        Goofspiel (right). The `SW-optimal' symbols indicate payoffs corresponding to %
        social-welfare-maximizing equilibria.%
    }%
    \label{fig:sheriff goofspiel payoffs}%
\end{figure*}%
}
\newcommand{\placeexperimentshere}{%
\begin{table*}[ht]%
    \centering%
    \includegraphics[scale=.82]{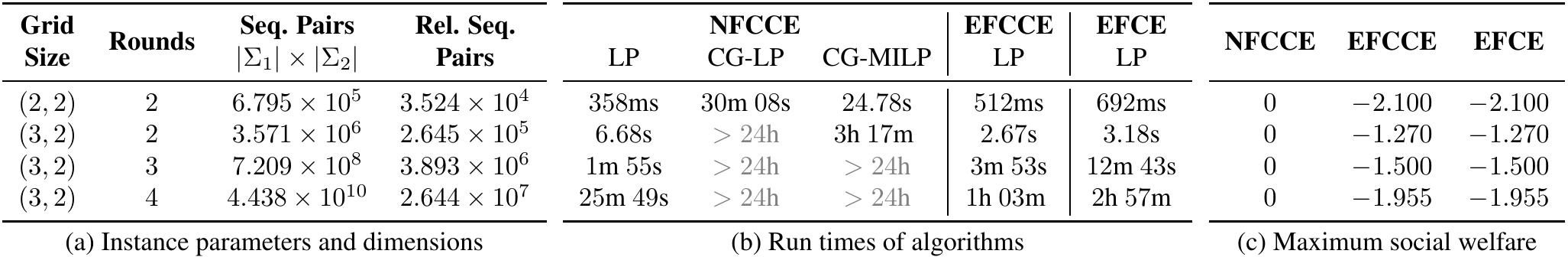}%
    \caption{Experimental results on several instances of the Battleship game.}%
    \label{table:battleship}%
\end{table*}%
\begin{table*}[ht]%
    \centering%
    \includegraphics[scale=.82]{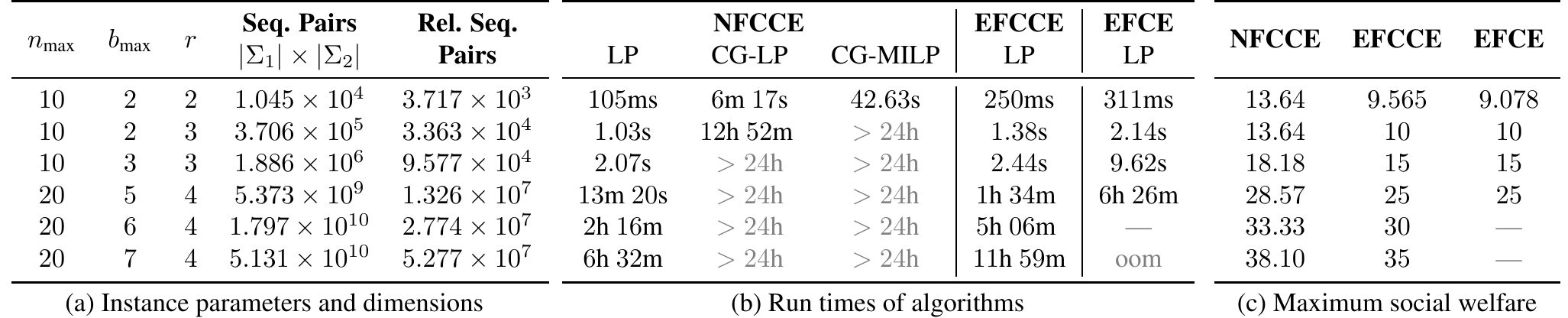}%
    \caption{Experimental results on several instances of the Sheriff game. `oom' means `Out of memory'.}%
    \label{table:sheriff}%
\end{table*}%
}

\title{Coarse Correlation in Extensive-Form Games}
\author{
  Gabriele Farina\\
  Computer Science Department\\
  Carnegie Mellon University\\
  Pittsburgh, PA 15213 \\
  \And
  Tommaso Bianchi\\
  DEIB\\
  Politecnico di Milano\\
  20133 Milan, Italy\\
  \And
  Tuomas Sandholm\\
  Computer Science Department\\
  Carnegie Mellon University\\
  Pittsburgh, PA 15213\\[2mm]
  Strategic Machine, Inc.\\
  Strategy Robot, Inc.\\
  Optimized Markets, Inc.
}


\usepackage{etoolbox}
\ifcsmacro{assumption}{}{
	
}
\ifcsmacro{corollary}{}{
	\newtheorem{corollary}{Corollary}[]
}
\ifcsmacro{definition}{}{
	\newtheorem{definition}{Definition}[]
}
\ifcsmacro{example}{}{
	
}
\ifcsmacro{fact}{}{
	
}
\ifcsmacro{informal_theorem}{}{
	
}
\ifcsmacro{lemma}{}{
	
}
\ifcsmacro{proposition}{}{
	\newtheorem{proposition}{Proposition}[]
}
\ifcsmacro{remark}{}{
	
}
\ifcsmacro{theorem}{}{
	
}

\usepackage{mathtools}
\mathtoolsset{centercolon}
\newcommand{\defeq}{\mathrel{:\mkern-0.25mu=}}



\newcommand{\bbR}{\ensuremath{\mathbb{R}}}

\newcommand{\cX}{\ensuremath{\mathcal{X}}}
\newcommand{\cY}{\ensuremath{\mathcal{Y}}}







\newcommand{\be}{\begin{eqnarray}}
\newcommand{\ee}[1]{\label{#1}\end{eqnarray}}

	\newcommand{\ese}{\end{eqnarray*}}
	\newcommand{\bse}{\begin{eqnarray*}}
	\def\beq{\begin{equation}}
	\def\eeq{\end{equation}}

	\def\fnote#1{\footnote}

	\def\*{{{\LARGE\bf $^*$}}}

	\def\cX{{\cal X}}
	\def\cY{{\cal Y}}

	\def\argmin{\mathop{\rm argmin}}

\begin{document}

    \twocolumn[
        \maketitle
    ]

    \renewcommand\thefootnote{}
    \footnotetext{Author emails: G. Farina \texttt{<gfarina@cs.cmu.edu>}; T. Bianchi \texttt{<tommaso4.bianchi@mail.polimi.it>}; T. Sandholm \texttt{<sandholm@cs.cmu.edu>}. This work was done while T. Bianchi was visiting Carnegie Mellon University.}

    \begin{abstract}
    Coarse correlation models strategic interactions of rational agents complemented by a \emph{correlation device}, that is a mediator that can recommend behavior but not enforce it. Despite being a classical concept in the theory of \emph{normal-form} games for more than forty years, not much is known about the merits of coarse correlation in \emph{extensive-form} settings. In this paper, we consider two instantiations of the idea of coarse correlation in extensive-form games: normal-form coarse-correlated equilibrium (NFCCE), already defined in the literature, and \emph{extensive-form coarse-correlated equilibrium} (EFCCE), which we introduce for the first time. We show that EFCCE is a subset of NFCCE and a superset of the related extensive-form correlated equilibrium. We also show that, in two-player extensive-form games, social-welfare-maximizing EFCCEs and NFCEEs are bilinear saddle points, and give new efficient algorithms for the special case of games with no chance moves. In our experiments, our proposed algorithm for NFCCE is two to four orders of magnitude faster than the prior state of the art.
\end{abstract} 
    \section{Introduction}

As a generic term, \emph{correlated equilibrium} denotes a family of
solution concepts whereby a mediator that can recommend behavior but not enforce it complements the interaction of fully rational agents. Before the game starts, a mediator---also called a \emph{correlation device}---samples a tuple of normal-form plans (one for each player) from a publicly known correlated distribution. He or she then proceeds to privately
ask to each player whether they would like to commit to playing according to the plan that was sampled for them. Being an \emph{equilibrium}, the correlated distribution must be such that no player has benefit in not following the recommendations, assuming all other players follow.
As argued by~\citet{Ashlagi08:Value}, correlated equilibrium is a good candidate to model strategic interactions in which intermediate forms of centralized control can be achieved.

In the context of extensive-form (that is, sequential) games, two different instantiations of the idea of correlated equilibrium are well-known in the literature: \emph{normal-form correlated equilibrium}~(NFCE)~\citep{Aumann74:Subjectivity,Gilboa89:Nash} and \emph{extensive-form correlated equilibrium}~(EFCE)~\citep{Stengel08:Extensive}. The two solution concepts differ in what the mediator reveals to the players. In an NFCE, the mediator privately reveals to each player, just before the game starts, the (whole) normal-form plan that was sampled for them. Players are then free to either play according to the plan, or play any other strategy that they desire. In an EFCE the mediator does not reveal the whole plan to the players before the game starts. Instead, he or she incrementally reveals the plan by recommending individual \emph{moves}. Each recommended move is only revealed when the player reaches the decision point for which the recommendation is relevant. Each player is free to play a move different than the recommended one, but doing so comes at the cost of future recommendations, as the mediator will immediately stop issuing recommendations to players that defect. Because of this deterrent, and because players have to decide whether to follow recommendations knowing less about the sampled normal-form plan than in NFCE, a social-welfare-maximizing EFCE always achieves social welfare equal or higher than any NFCE.

\emph{Coarse} correlated equilibrium differs from correlated equilibrium in that players must decide whether or not to commit to playing according to the recommendations of the mediator \emph{before} observing such recommendations. \emph{Normal-form coarse-correlated equilibrium (NFCCE)}~\citep{Moulin78:Strategically} is the coarse equivalent of NFCE. Before the game starts, players decide whether to commit to playing according to the normal-form plan that was sampled by the mediator (from some correlated distribution known to players), without observing such a plan first. Players who decide to commit will privately receive the plan that was sampled for them; players that decide to not commit will not receive any recommended plan, and are free to play according to any strategy they desire. Since players know less at the time of commitment than either NFCE or EFCE, a social-welfare-maximizing NFCCE is always guaranteed to achieve equal or higher social welfare than any NFCE or EFCE. No coarse equivalent of EFCE is currently known in the literature.

In this paper, we introduce the coarse equivalent of EFCE, which we coin \emph{extensive-form coarse-correlated equilibrium} (EFCCE). It is an intermediate solution concept between EFCE and NFCCE.
Specifically, EFCCE is akin to EFCE in that each recommended move is only revealed when the players reach the decision point for which the recommendation is relevant. However, unlike EFCE, the acting player must choose whether or not to commit to the recommended move \emph{before} such a move is revealed to them, instead of after.
\cref{fig:equilibria} shows how EFCCE fits inside of the family of correlated and coarse-correlated solution concepts.

We prove that EFCCE is always a subset of NFCCE and a superset of EFCE, and give an example of a game in which the three solution concepts lead to distinct solution sets. We also show that the problem of computing a social-welfare-maximizing EFCCE can be represented as a bilinear saddle-point, which can be solved in polynomial time in two-player extensive-form games with no chance moves but not in games with more than two players or two-player games with chance moves. Finally, we note that in two-player games with no chance moves, EFCCE leads to a linear program whose size is smaller than EFCE; because of this, EFCCE can also be used as a computationally lighter relaxation of EFCE---for example, as a routine in the algorithm by~\citet{Cermak16:Using,Bosansky17:Computation} for computing strong Stackelberg equilibrium.

We also show that the problem of computing a social-welfare-maximizing NFCCE can be expressed as a bilinear saddle-point problem, which can be solved in polynomial time in two-player extensive-form games with no chance moves (the problem is known to be NP-hard in games with more than two players and/or chance moves). This formulation is significant, as it enables several new classes of algorithms to be employed to compute social-welfare-maximizing NFCCE. In particular, we show that it enables a linear programming formulation that in our experiments is two to four orders of magnitude faster than the prior state of the art.

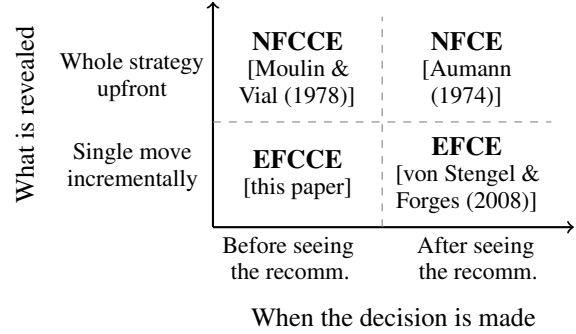
\begin{figure}[ht]
    \begin{tikzpicture}[scale=2]
      \draw[thick,->] (0, 0) --node[xshift=-2.5cm,rotate=90]{What is revealed} (0, 1.5);
      \draw[thick,->] (0, 0) --node[yshift=-1.2cm]{When the decision is made} (2.4, 0);
      \draw[gray,dashed, thin] (0, .7) -- (2.25, .7);
      \draw[gray,dashed, thin] (1.125, 0) -- (1.125, 1.4);

      \node[anchor=north] at (.5, 0) {\begin{minipage}{2cm}\centering\small Before seeing\\the recomm.\end{minipage}};
      \node[anchor=north] at (1.75, 0) {\begin{minipage}{2cm}\centering\small After seeing\\the recomm.\end{minipage}};

      \node[anchor=east] at (0, .4) {\begin{minipage}{1.9cm}\small\centering Single move\\incrementally\end{minipage}};
      \node[anchor=east] at (0, 1) {\begin{minipage}{1.9cm}\small\centering Whole strategy\\upfront\end{minipage}};

      \node[] at (0.5625, .35) {\begin{minipage}{2cm}
                                  \centering \textbf{EFCCE}\\
                                  \small{[this paper]}
                                \end{minipage}};
      \node[] at (1.6875, .35) {\begin{minipage}{1.9cm}
                                  \centering \textbf{EFCE}\\
                                  \small{[\citet{Stengel08:Extensive}]}
                                \end{minipage}};
      \node[] at (0.5625, 1.05) {\begin{minipage}{1.6cm}
                                  \centering \textbf{NFCCE}\\
                                  \small{[\citet{Moulin78:Strategically}]}
                                \end{minipage}};
      \node[] at (1.6875, 1.05) {\begin{minipage}{1.5cm}
                                  \centering \textbf{NFCE}\\
                                  \small{[\citet{Aumann74:Subjectivity}]}
                                \end{minipage}};
    \end{tikzpicture} 
    \caption{Taxonomy of correlated and coarse-correlated equilibria.}
    \label{fig:equilibria}
\end{figure}
    \section{Preliminaries}
\subsection{Extensive-Form Games}
    Extensive-form games are played on a game tree, and can capture both
    sequential and simultaneous moves, as well as private information.
    Each node $v$ in the game tree belongs to exactly one player
    $i \in \plset \cup \{\textsf{c}\}$. Player $\textsf{c}$ is a
    special player called the \emph{chance player}; it is used to denote
    random events that happen in the game, such as drawing a card from a deck
    or tossing a coin. The edges leaving $v$ represent actions that the
    player can take at that node; we denote the set of actions available at
    $v$ as $A_v$.
    In order to capture private information, the set of nodes
    that belong to each player $i \in \plset$ are
    partitioned into a collection $\infos{i}$ of nonempty sets. Each
    $I \in \infos{i}$ is called an \emph{information set} of Player $i$, and
    is a set of nodes that Player $i$ cannot distinguish between, given what
    the player has observed so far. In this paper, we only consider games
    with \emph{perfect recall}, that is games where no player forgets what he
    or she knew earlier. Necessarily, for any $I \in \infos{i}$ and
    $u,v\in I$, it must be $A_u = A_v$, or otherwise Player $i$ would be able
    to distinguish between $u$ and $v$. For this reason, we will often write
    $A_I$ to mean the set of available
    actions at any node in $I$, defined as $A_I \defeq A_u$ for any $u \in I$.
    Finally, two information sets $I_i, I_j$ for Player $i$ and $j$, respectively, are said to be \emph{connected}, denoted $I_i \conn I_j$ if there exist $u \in I_i, v \in I_j$ such that the path from the root to $u$ passes through $v$ or vice versa.

Nodes $v$ for which $A_v$ is empty are called
    \emph{leaves}, and denote an end state of the game. We denote the set of
    leaves of the game with the symbol $Z$. Each $z \in Z$ is associated with
    a tuple of $n$ payoffs (one for each non-chance player); we denote $u_i(z)$ the
    payoff for Player $i \in \plset$ at $z$.

\subsection{Sequences ($\seqs{}$)}
    The set of \emph{sequences} of Player $i$, denoted $\seqs{i}$, is defined
    as the set
    $\seqs{i} \defeq \{(I, a): I \in \infos{i}, a \in A_I\} \cup \{\emptyseq_i\}$,
    where the special sequence $\emptyseq_i$ is called \emph{empty sequence}.
    Given a node $v$ the belongs to Player 1, the \emph{parent sequence} of
    $v$, denoted $\sigma_i(v)$, is defined as the last sequence
    $(I, a) \in \seqs{i}$ encountered on the path from the root to $v$; if no
    such sequence exists (i.e., Player $i$ never acts before $v$), we let
    $\sigma_i(v) = \emptyseq_i$. The \emph{parent sequence} $\sigma(I)$ of an
    information set $I \in \infos{i}$ is defined as
    $\sigma(I) \defeq \sigma(v)$ where $v$ is any node in $I$ (all choices
    produce the same parent sequence, since the game is assumed to have
    perfect recall).
    Finally, we introduce the concept of \emph{relevant} pairs of sequences.
    Given two sequences $\sigma_i$ and $\sigma_j$ for two distinct Player $i$
    and $j$, respectively, we say that the pair $(\sigma_i, \sigma_j)$ is
    relevant if either one sequence is the empty sequence, or if
    $\sigma_i = (I_i, a_i), \sigma_j = (I_j, a_j)$ and $I_i \conn I_j$.

\subsection{Reduced-Normal-Form Plans ($\Pi$)}
    A \emph{normal-form plan} for Player $i$ defines a choice of action $a_I \in A_I$ for \emph{every} information set $I \in \infos{i}$ of the player. However, this representation contains irrelevant information, as some information sets of Player $i$ may become unreachable after the player makes certain decisions higher up the tree. A \emph{reduced-normal-form plan} $\pi$ is a normal-form plans where this irrelevant information is removed: it defines a choice of action $\pi(I) = a_I \in A_I$ for \emph{every} information set $I \in \infos{i}$ that is still reachable as a result of the other choices in $\pi$ itself. We denote the set of reduced-normal-form plans of Player $i$ as $\Pi_i$. Given a sequence $\sigma = (I, a) \in \seqs{i}$, we denote with $\Pi_i(\sigma)$ the (sub)set of reduced-normal-form plans $\pi$ that prescribe that Player $i$ play all actions on the path from the root to any node $v \in I$, including playing action $a$ at $v$. More formally, we let $\Pi_i(\emptyseq_i) \defeq \Pi_i$ and recursively for any $\sigma=(I,a)\in\seqs{i}$ we let $\Pi_i(\sigma) \defeq \Pi_i(\sigma_i(I)) \cap \{\pi \in \Pi_i : \pi(I) = a\}$. We will also make frequent use of the shorthand $\Pi_i(z) \defeq \Pi_i(\sigma_i(z))$ to denote the set of reduced-normal-form plans that allow Player $i$ to reach leaf $z\in Z$, and $\Pi_i(I) \defeq \Pi_i(\sigma_i(I))$ to denote the set of reduced-normal-form plans that allow Player $i$ to reach information set $I \in \infos{i}$.
Finally, a \emph{reduced-normal-form strategy} for Player $i$ is a probability distribution over $\Pi_i$.

\subsection{Polytope of Sequence-Form Strategies ($\seqf{}$)}

    The sequence-form
    representation~\citep{Romanovskii62:Reduction,Koller96:Efficient,Stengel96:Efficient}
    is a more compact way of representing normal-form strategies of a player
    in a perfect-recall extensive-form game. Formally, fix a player
    $i\in\plset$, and let $\mu$ be a reduced-normal-form strategy for Player
    $i$, that is some probability distribution over $\Pi_i$. The
    \emph{sequence-form strategy induced by $\mu$} is the nonnegative real
    vector $\vec{y}$, indexed over $\sigma\in\seqs{i}$, defined as
    \begin{equation}\label{eq:sf definition}
        y(\sigma) \defeq \sum_{\pi \in \Pi_i(\sigma)} \mu(\pi).
    \end{equation}
    The set of sequence-form strategies that can be induced as $\mu$ varies over the set of all possible probability distributions over $\Pi_i$ is denoted $\seqf{i}$. In particular, \citet{Koller96:Efficient} prove that it is a convex polytope (called the \emph{sequence-form polytope}) in $\bbR_+^{|\seqs{i}|}$, described by the affine constraint
\begin{equation*}
    \seqf{i} = \{\vec{y} \in \bbR_+^{|\seqs{i}|}: \mat{F}_i \vec{y} = \mat{f}_i\},
\end{equation*}
where $\mat{F}_i$ is a sparse $|\infos{i}|\times |\seqs{i}|$ matrix with entries in $\{0,1,-1\}$, and $\vec{f}_i$ is a vector with entries in $\{0, 1\}$.

\subsection{Polytope of Extensive-Form Correlation Plans ($\Xi$)}
    Given any probability distribution $\mu$ over
    $\bigtimes_{i=1}^n \Pi_i$ in an extensive-form game, the
    \emph{correlation plan} $\vec{\xi}$
    induced by $\mu$ is defined as the real vector, indexed over tuples
    $(\sigma_1, \dots, \sigma_n) \in \bigtimes_{i=1}^n \seqs{i}$ of
    pairwise-relevant sequences, where each entry is
    \begin{equation}\label{eq:xi definition}
      \xi(\sigma_1, \dots, \sigma_n) \defeq \sum_{\substack{\pi_1 \in \Pi_1(\sigma_1)\\\cdots\\\pi_n \in \Pi_n(\sigma_n)}} \mu(\pi_1, \dots, \pi_n).
    \end{equation}
    The set of correlation plans $\vec{\xi}$ that can be induced as $\mu$
    varies over the set of all possible probability distributions is denoted as $\Xi$ and called the
    \emph{polytope of extensive-form correlation plans}. It is always a
    polytope in a space of dimension polynomial in the input game
    description. Furthermore, in two-player games without chance moves, $\Xi$
    can be described as the intersection of a polynomial number (in the game
    tree) linear constraints, as shown by~\citet{Stengel08:Extensive}. The
    same authors also prove that this property does not always hold in games
    with more than two players and/or chance moves.

    Finally, for any $i \in \plset$, $\sigma \in \seqs{i}$, and $z \in Z$, we introduce the following notation that we will use often in the remainder of this paper:
    \begin{equation*}
        \xileaf{\sigma}{z} \defeq \xi(\sigma_1(z), \dots, \sigma_{i-1}(z), \sigma, \sigma_{i+1}(z), \dots, \sigma_n(z)).
    \end{equation*} 
    \section{Saddle-Point Formulation of NFCCE}\label{sec:saddle_point_nfcce}

In this section, we show that the problem of computing an NFCCE that achieves social welfare at least $\tau$ (for some given $\tau \in \bbR$) in an $n$-player extensive-form game with perfect recall can
be expressed as a bilinear saddle-point problem, that is as an optimization problem of the form
\[
  \argmin_{\vec{x} \in \cX} \max_{\vec{y} \in \cY}\ \vec{x}^{\!\top}\!\! \mat{A} \vec{y},
\]
where $\cX$ and $\cY$ are convex and compact sets. In our specific case, $\cX$ and $\cY$ will be convex polytopes in low-dimensional spaces (in particular, $\cX = \Xi$ and $\cY \subseteq \seqf{1}\times\dots\times \seqf{n}\times \symp{n+1}$).
As we will show later, this formulation immediately implies that in two-player games with no chance moves, a social-welfare-maximizing NFCCE can be computed in polynomial time as the solution of a linear program.

We now go through the steps that enable us to formulate the problem of computing an NFCCE as a bilinear saddle-point problem. The general structure of the argument is similar to that of~\citet{Farina19:Correlation} in the context of EFCE, and we will use it again later when dealing with EFCCE.

By definition, a correlated distribution $\mu$ over $\bigtimes_{i=1}^n \Pi_i$ is an NFCCE if no player has an incentive to unilaterally deviate from the recommended plan assuming that nobody else does. More formally, let $i$ be any player, and let $\hat\mu_i$ be any probability distribution over $\Pi_i$, independent of $\mu$. Playing according to $\hat\mu$ must give Player $i$ an expected utility $\hat u_i$ at most equal to the expected utility $u_i$ of committing to the mediator's recommendation. In order to express $\hat u_i$ and $u_i$ as a function of $\mu$ and $\hat\mu$, it is necessary to quantify the probability of the game ending in any leaf $z \in Z$. When the Player deviates and plays according to $\hat\mu_i$, the probability that the game ends in $z$ is equal to the probability that the mediator samples a plan $\pi_j \in \Pi_j(z)$ for any Player $j$ other than $i$, and that Player $i$ samples a plan $\pi_i \in \Pi_i(z)$. Correspondingly, using the independence of $\mu$ and $\hat\mu_i$, we can write
\begin{align*}
\hat u_i \!=\! \sum_{z\in Z}\! \mleft[ u_i(z) \!\mleft(\!\!\!\sum_{\substack{\pi_i \in \Pi_i~~\\~~\pi_{j} \in \Pi_{j}(z)\ \forall j\neq i}} \hspace{-7mm}\mu(\pi_1, \dots, \pi_n)\!\!\mright)\!\!
\mleft(\sum_{\pi_i\in\Pi_i(z)} \!\!\!\!{\hat\mu}_{i}(\pi_i)\!\!\mright) \!\!\mright]\!.
\end{align*}

On the other hand, the probability that leaf $z$ is reached when all players commit to the mediator's recommendation is equal to the probability that the mediator samples from $\mu$ plans $\pi_j \in \Pi_j(z)$ for all players $j \in \{1,\dots,n\}$:
\begin{align}\label{eq:utility nfcce}
u_i = \sum_{z\in Z} \mleft[u_i(z) \mleft(\sum_{\pi_{j} \in \Pi_{j}(z)\ \forall j} \hspace{-2mm}\mu(\pi_1, \dots, \pi_n) \mright)\!\mright]\!.
\end{align}

Using the definition of extensive-form correlation plan~\eqref{eq:xi definition} and sequence-form strategy~\eqref{eq:sf definition} we can convert the requirement that $\hat u_i \le u_i$ for all choices of $i$ and $\hat\mu_i$ into the following equivalent condition:

\begin{proposition}\label{prop:nfcce}
    An extensive-form correlation plan $\vec{\xi} \in \Xi$ is an NFCCE if and
    only if the following inequality holds for any player $i\in\{1,\dots, n\}$ and sequence-form strategy $\vec{y}_i \in \seqf{i}$:
    \begin{align}\label{eq:xi incentive nfcce}
        \sum_{z\in Z} u_i(z) \xileaf{\emptyseq_i}{z} y_i(\sigma_i(z)) \le \sum_{z\in Z} u_i(z) \xileaf{\sigma_i(z)}{z}.
    \end{align}
\end{proposition}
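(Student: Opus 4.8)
The plan is to substitute the definitions of the correlation plan \eqref{eq:xi definition} and of the sequence-form strategy \eqref{eq:sf definition} into the expressions for $\hat u_i$ and $u_i$ derived above, and to observe that both utilities depend on the underlying distribution $\mu$ only through the correlation plan $\vec\xi$ it induces. Once that is done, matching quantifiers gives the claimed equivalence in both directions.

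First I would rewrite the ``on-recommendation'' utility $u_i$ of \eqref{eq:utility nfcce}. For a fixed leaf $z$, its inner sum $\sum_{\pi_j\in\Pi_j(z)\ \forall j}\mu(\pi_1,\dots,\pi_n)$ is, by \eqref{eq:xi definition}, exactly $\xi(\sigma_1(z),\dots,\sigma_n(z)) = \xileaf{\sigma_i(z)}{z}$; here one must check that the tuple $(\sigma_1(z),\dots,\sigma_n(z))$ is pairwise relevant, which holds because all the $\sigma_j(z)$ lie on the single path from the root to $z$, so for any two non-empty ones the witnessing nodes are comparable and hence their information sets satisfy $I_j \conn I_k$ (pairs involving an empty sequence being relevant by definition). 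This yields $u_i = \sum_{z\in Z} u_i(z)\,\xileaf{\sigma_i(z)}{z}$, i.e.\ the right-hand side of \eqref{eq:xi incentive nfcce}, and notably an expression free of $\hat\mu_i$.

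Next I would do the same for the deviation utility $\hat u_i$. In its first inner sum $\pi_i$ ranges over all of $\Pi_i = \Pi_i(\emptyseq_i)$ while the remaining $\pi_j$ range over $\Pi_j(z)$, so by \eqref{eq:xi definition} this sum equals $\xi(\sigma_1(z),\dots,\sigma_{i-1}(z),\emptyseq_i,\sigma_{i+1}(z),\dots,\sigma_n(z)) = \xileaf{\emptyseq_i}{z}$ (pairwise relevant: every pair touching $\emptyseq_i$ is relevant by definition, the rest lie on the path to $z$ as before). Its second inner sum is $\sum_{\pi_i\in\Pi_i(z)}\hat\mu_i(\pi_i) = \hat y_i(\sigma_i(z))$, where $\hat{\vec y}_i\in\seqf{i}$ is the sequence-form strategy induced by $\hat\mu_i$ via \eqref{eq:sf definition}. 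Hence $\hat u_i = \sum_{z\in Z} u_i(z)\,\xileaf{\emptyseq_i}{z}\,\hat y_i(\sigma_i(z))$, which is the left-hand side of \eqref{eq:xi incentive nfcce} evaluated at $\vec y_i = \hat{\vec y}_i$. I would then tie the quantifiers together: $\vec\xi$ is an NFCCE iff some inducing $\mu$ satisfies $\hat u_i \le u_i$ for every player $i$ and every distribution $\hat\mu_i$ over $\Pi_i$; by the two rewrites, and since both sides are functions of $\vec\xi$ alone, this holds for some inducing $\mu$ iff it holds for every inducing $\mu$ iff, for every $i$, the inequality $\sum_z u_i(z)\xileaf{\emptyseq_i}{z}\hat y_i(\sigma_i(z)) \le \sum_z u_i(z)\xileaf{\sigma_i(z)}{z}$ holds for all $\hat\mu_i$. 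Finally, as $\hat\mu_i$ ranges over all probability distributions on $\Pi_i$, the induced $\hat{\vec y}_i$ ranges over exactly $\seqf{i}$ (this is precisely the definition of $\seqf{i}$), so the condition is equivalent to \eqref{eq:xi incentive nfcce} holding for all $\vec y_i\in\seqf{i}$, as claimed.

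I expect the only real work to be bookkeeping: carefully verifying that each sequence tuple at which $\vec\xi$ is evaluated is pairwise relevant (so that $\vec\xi$ is genuinely defined there), and making explicit that the map $\hat\mu_i\mapsto\hat{\vec y}_i$ is onto $\seqf{i}$, so that quantifying over deviation strategies $\hat\mu_i$ is the same as quantifying over $\vec y_i\in\seqf{i}$.
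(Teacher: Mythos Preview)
Your proposal is correct and follows exactly the route the paper indicates: the paper does not spell out a proof but simply states that ``using the definition of extensive-form correlation plan~\eqref{eq:xi definition} and sequence-form strategy~\eqref{eq:sf definition} we can convert the requirement that $\hat u_i \le u_i$ for all choices of $i$ and $\hat\mu_i$ into the following equivalent condition,'' and your argument is precisely that unpacking. Your additional care in verifying pairwise relevance of the sequence tuples and the surjectivity of $\hat\mu_i \mapsto \hat{\vec y}_i$ onto $\seqf{i}$ makes explicit the bookkeeping that the paper leaves to the reader.
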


Inequality~\eqref{eq:xi incentive nfcce} is of the form $\vec{\xi}^{\!\top}\!\! \mat{A}_i \vec{y}_i - \vec{b}_i^{\!\top}\! \vec{\xi} \le 0$, where $\mat{A}_i$ and $\vec{b}_i$ are suitable sparse matrices/vectors that only depend on $i$. With this new notation, we can rewrite the condition in \cref{prop:nfcce} as follows: $\vec{\xi} \in \Xi$ is an NFCCE if and only if
\begin{align}
    &\ \ \ \max_{i=1}^n \max_{\vec{y}_i \in \seqf{i}} \mleft\{ \vec{\xi}^{\!\top}\!\! \mat{A}_i \vec{y}_i - \vec{b}_i^{\!\top}\! \vec{\xi} \mright\} \le 0 \nonumber\\
\iff\ \ & \max_{
    \substack{\vec{\lambda}\in\symp{n}\\
    \vec{y}_i \in \seqf{i} \ \forall i}}
        \mleft\{  \sum_{i=1}^n \lambda_i\mleft(\vec{\xi}^{\!\top}\!\! \mat{A}_i \vec{y}_i - \vec{b}_i^{\!\top}\! \vec{\xi} \mright)\mright\} \le 0 \nonumber\\
\iff\ \ & \max_{
    \substack{\vec{\lambda}\in\symp{n}\\
    \tilde{\vec{y}}_i \in \lambda_i\seqf{i} \ \forall i}}
        \mleft\{ \sum_{i=1}^n \vec{\xi}^{\!\top}\!\! \mat{A}_i \tilde{\vec{y}}_i - \lambda_i \vec{b}_i^{\!\top}\! \vec{\xi} \mright\} \le 0 \label{eq:primal nfcce},
\end{align}
where in the last transformation we operated a change of variable $\tilde{\vec{y}_i} \defeq \lambda_i \vec{y}_i$; it is a simple exercise to prove that this change of variable is legitimate and that the domain of the maximization is a convex polytope.
Since an NFCCE always exists, in particular any $\vec{\xi}$ such that
\begin{equation}\label{eq:nfcce saddle point}
  \vec{\xi} \in \argmin_{\vec{\xi}\,\in\,\Xi}\max_{
    \substack{\vec{\lambda}\in\symp{n}\\
    \tilde{\vec{y}}_i \in \lambda_i\seqf{i} \ \forall i}}
        \mleft\{ \sum_{i=1}^n \vec{\xi}^{\!\top}\!\! \mat{A}_i \tilde{\vec{y}}_i - \lambda_i \vec{b}_i^{\!\top}\! \vec{\xi} \mright\}
\end{equation}
is guaranteed to be an NFCCE. Since the domains of the minimization and maximization problems are both convex polytopes, and since the objective function is bilinear, the optimization problem in~\eqref{eq:nfcce saddle point} is a bilinear saddle-point problem.

\subsection{Enforcing a Lower Bound on Social Welfare}
Given an NFCCE $\mu$, social welfare is defined as $\text{SW} \defeq \sum_{i=1}^n u_i$, where $u_i$ is as in Equation~\eqref{eq:utility nfcce}. Hence, it is a linear function of the correlation plan $\vec{\xi}$, which can be expressed as $\text{SW}: \Xi \ni \vec{\xi} \mapsto \vec{c}^{\!\top}\!\vec{\xi}$ where $\vec{c} \defeq \sum_{i=1}^n \vec{b}_i$. Consequently, an NFCCE that guarantees a given lower bound $\tau$ on the social welfare can be expressed as in~\eqref{eq:nfcce saddle point} where the domain of the minimization is changed from $\vec{\xi}\in\Xi$ to $\vec{\xi} \in \Xi \cap \{\vec{\xi}: \vec{c}^{\!\top}\! \vec{\xi} \ge \tau\}$. Note that this preserves the polyhedral nature of the optimization domain.


Finally, the same construction can be used verbatim if social welfare is replaced with any linear function of $\vec{\xi}$.

\subsection{Connection to Linear Programming}
The saddle-point formulation in~\eqref{eq:nfcce saddle point} can be mechanically translated into a linear program by taking the dual of the internal maximization problem, that is of~\ref{eq:primal nfcce}. Specifically, the dual problem is the linear program
\refstepcounter{equation}
\label{lp:max dev nfcce}
\begin{equation*}
   \thetag\theequation : \mleft\{\begin{array}{rll}
        \min & u \\
        \text{s.t.}     & u - \vec{v}_i^{\!\top}\!\vec{f}_i + \vec{b}_i^{\!\top}\! \vec{\xi} \ge 0 & \forall\, i \in \{ 1,\dots,n \} \\[.5mm]
                        & \mat{F}_i^{\!\top}\! \vec{v}_i - \mat{A}_i^{\!\top}\! \vec{\xi} \ge \vec{0} & \forall\, i \in \{ 1,\dots,n \}\\[2mm]
                        & u \in \bbR, \vec{v}_i \in \bbR^{|\infos{i}|} & \forall\, i \in \plset.\\
    \end{array}\mright.
\end{equation*}

(See the Preliminaries section for the meaning of $\mat{F}_i$ and $\vec{f}_i$). Problem~\eqref{lp:max dev nfcce} has a polynomial number of variables and constraints.

By strong duality, the value of~\eqref{lp:max dev nfcce} is the same as the value of the primal problem, that is the maximum `deviation benefit' $\hat u_i - u_i$ across all players $i\in \plset$ and probability distributions $\hat \mu_i$ over $\Pi_i$. Hence, we can find an NFCCE $\vec{\xi}$ that maximizes any given objective $\vec{c}^{\!\top}\! \vec{\xi}$ by adding the constraint $u \le 0$ and solving the modified linear program
\refstepcounter{equation}
\label{lp:nfcce}
\begin{equation*}
    \thetag\theequation : \mleft\{\begin{array}{rll}
        \max & \vec{c}^{\!\top}\!\vec{\xi} \\
        \text{s.t.}     & u - \vec{v}_i^{\!\top}\!\vec{f}_i + \vec{b}_i^{\!\top}\! \vec{\xi} \ge 0 & \forall\, i \in \{ 1,\dots,n \} \\[.5mm]
                        & \mat{F}_i^{\!\top}\! \vec{v}_i - \mat{A}_i^{\!\top}\! \vec{\xi} \ge \vec{0} & \forall\, i \in \{ 1,\dots,n \}\\[.5mm]
                        & u \le 0\\[2mm]
                        & \vec{\xi} \in \Xi\\
                        & u \in \bbR, \vec{v}_i \in \bbR^{|\infos{i}|} & \forall\, i \in \plset.\\
    \end{array}\mright.
\end{equation*}
The linear program above always has a polynomial number of variables, but potentially an exponential number of constraints because of the condition $\vec{\xi} \in \Xi$. However, in two-player extensive-form games with no chance moves,~\eqref{lp:nfcce} is \emph{guaranteed} to have a polynomial number of constraints as $\Xi$ can be described compactly~\citep{Stengel08:Extensive}. Hence, in those games a social-welfare-maximizing NFCCE can be computed in polynomial time after setting $\vec{c} \defeq \sum_{i=1}^n \vec{b}_i$. 
    \section{EFCCE: An Intermediate Solution Concept}

In this section, we introduce a new solution concept which we coin
\emph{extensive-form coarse-correlated equilibrium (EFCCE)}. It combines the
idea of \emph{coarse correlation}---that is, players must decide whether they
want to commit to following the recommendations issued by the correlation
device, before observing such recommendations---with the idea of
\emph{extensive-form correlation}---that is, recommendations are revealed
incrementally as the players progress on the game tree. Specifically, EFCCE
is akin to EFCE in that each recommended move is only revealed when the
players reach the decision point for which the recommendation is relevant.
However, unlike EFCE, the acting player must choose whether or not to commit
to the recommended move \emph{before} such a move is revealed to them,
instead of after. Each choice is binding only with respect to the decision
point for which the choice is made, and players can make different choices at
different decision points. Just like EFCE, defections (that is, deciding to
\emph{not} commit to following the correlation device's recommended move)
come at the cost of future recommendations, as the correlation device will
stop issuing recommendations to the defecting player. As with all correlated
equilibria, the correlated distribution from which the recommendations are
sampled must be such no player has incentives to unilaterally deviate when no
other player does.

\subsection{Saddle-Point Formulation}\label{sec:saddle_point_efcce}

In this section, we show that an EFCCE can also be expressed as the solution to a bilinear saddle-point problem. To do so, we use the idea of \emph{trigger agents}~\citep{Gordon08:No,Dudik09:Sampling}:

\begin{definition}
    Let $i \in \plset$ be a player, let $\hat I \in \infos{i}$ be an information set for Player $i$, and let $\hat \mu$ be a probability distribution over $\Pi_i(\hat I)$. An \emph{$(\hat I, \hat \mu)$-trigger agent for Player $i$} is a player that commits to and follows all recommendations issued by the mediator until they reach a node $v \in \hat I$ (if any). When any node $v \in \hat I$ is reached, the player `gets triggered', stops committing the recommendations and instead plays according to a reduced-normal-form plan sampled from $\hat \mu$ until the game ends.
\end{definition}

By definition, a correlated distribution $\mu$ over $\bigtimes_{i=1}^n \Pi_i$ is an EFCCE when, for all $i \in \{1,\dots, n\}$, the value $u_i$ that Player $i$ obtains by following the recommendations is at least as large as the expected utility $\hat u_{\hat I}$ attained by any $(\hat I, \hat \mu)$-trigger agent for that player (assuming nobody else deviates).
The expected utility for Player $i$ when everybody commits to following the mediator's recommendations is as in Equation~\eqref{eq:utility nfcce}.
In order to express the expected utility of the $(\hat I, \hat \mu)$-trigger agent, we start by computing the probability of the game ending in each possible leaf $z\in Z$. Let $(\pi_1, \dots, \pi_n)$ be the tuple of reduced-normal-form plans that was sampled by the mediator. Two cases must be distinguished:
\begin{itemize}[nolistsep,itemsep=1mm]
  \item The path from the root to $z$ passes through a node $v \in \hat I$. We denote the set of such leaves as $Z_{\hat I}$. In this case, the trigger agent commits to following all recommendations until just before $\hat I$, and then plays according to a reduced-normal-form plan $\hat \pi \in \Pi_i(\hat I)$ sampled from the distribution $\hat \mu$ from $\hat I$ onwards. Hence, the following conditions are necessary and sufficient for the game to terminate at $z$: $\pi_j \in \Pi_j(z)$ for all $j \in \{1,\dots, n\} \setminus \{i\}$, $\pi_i \in \Pi_i(\hat I)$, and $\hat \pi \in \Pi_i(z)$. Correspondingly, the probability that the game ends at $z \in Z_{\hat I}$ is
      \begin{equation}\label{eq:Z I prob}
        p_{z} \defeq \!\mleft(\sum_{\substack{\pi_i \in \Pi_i(\hat I)\\\pi_{j} \in \Pi_{j}(z)\ \forall j\neq i}} \hspace{-5mm}\mu(\pi_1, \dots, \pi_{n})\!\mright)\!\!\mleft(\sum_{\pi_i\in\Pi_i(z)}\!\!\! {\hat\mu}(\pi_i)\!\mright)\!.
      \end{equation}

  \item Otherwise, the trigger agent never gets triggered, and instead commits to following all recommended moves until the end of the game. The probability that the game ends at $z \in Z \setminus Z_{\hat I}$ is therefore
      \begin{equation}\label{eq:Z not I prob}
        q_z \defeq \sum_{\pi_{j} \in \Pi_{j}(z)\ \forall j} \hspace{-2mm}\mu(\pi_1, \dots, \pi_n).
      \end{equation}
\end{itemize}
With this information, the expected utility of the $(\hat I, \hat \mu)$-trigger agent is computed as
\begin{align*}
    \hat u_{\hat I} = \sum_{z \in Z_{\hat I}} u_i(z)\, p_z\ + \sum_{z \in Z \setminus Z_{\hat I}} \!\! u_i(z)\, q_z.
\end{align*}

Using to the definition of extensive-form correlation plan~\eqref{eq:xi definition} and sequence-form strategy~\eqref{eq:sf definition}, we can rewrite the condition $u_i \ge \hat u_{\hat I}$ (which must hold for all choices of $i$, $\hat I \in \infos{i}$ and probability distribution $\hat \mu$ over $\Pi_i(\hat I)$) compactly as in the following proposition.

\begin{proposition}\label{prop:efcce}
    An extensive-form correlation plan $\vec{\xi} \in \Xi$ is an EFCCE if and
    only if the following inequality holds for any player $i\in\{1,\dots, n\}$, information set $\hat{I} \in \infos{i}$, and sequence-form strategy $\vec{y}_{i, \hat I} \in \seqf{i}$ such that $y_{i, \hat I}(\sigma(\hat I)) = 1$:
    \begin{align}\label{eq:xi incentive efcce}
        & \sum_{z\in Z_{\hat I}} u_i(z) \xileaf{\sigma_i(\hat{I})}{z} y_{i, \hat I}(\sigma_i(z)) \nonumber \\[-3mm]
        & \hspace{2.9cm} \le \sum_{z\in Z_{\hat I}} u_i(z) \xileaf{\sigma_i(z)}{z}.
    \end{align}
\end{proposition}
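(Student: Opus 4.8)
The plan is to follow the same recipe already used for Proposition~\ref{prop:nfcce}: start from the incentive condition $u_i \ge \hat u_{\hat I}$ expressed in terms of $\mu$ and $\hat\mu$, substitute the definitions~\eqref{eq:Z I prob} and~\eqref{eq:Z not I prob} for the leaf-reaching probabilities, and then rewrite every $\mu$-sum as an entry of the correlation plan $\vec\xi$ via~\eqref{eq:xi definition} and every $\hat\mu$-sum as an entry of a sequence-form strategy via~\eqref{eq:sf definition}. The key algebraic observation is that the terms over $z \in Z \setminus Z_{\hat I}$ are \emph{identical} on both sides of $u_i \ge \hat u_{\hat I}$ (the trigger agent commits just like the baseline there, so $q_z$ appears with coefficient $u_i(z)$ on both sides), hence they cancel, and only the sum over $z \in Z_{\hat I}$ survives — this is exactly why the proposition's inequality ranges only over $Z_{\hat I}$.

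Concretely, I would first expand $u_i - \hat u_{\hat I} \ge 0$ as
\[
\sum_{z\in Z_{\hat I}} u_i(z)\,(q_z - p_z) \ \ge\ 0,
\]
using the cancellation above. For the left summand, $q_z = \sum_{\pi_j \in \Pi_j(z)\,\forall j}\mu(\pi_1,\dots,\pi_n)$ is by~\eqref{eq:xi definition} exactly $\xi(\sigma_1(z),\dots,\sigma_n(z)) = \xileaf{\sigma_i(z)}{z}$. For $p_z$, the first factor in~\eqref{eq:Z I prob} sums $\mu$ over $\pi_i \in \Pi_i(\hat I)$ and $\pi_j \in \Pi_j(z)$ for $j\neq i$; since $\Pi_i(\hat I) = \Pi_i(\sigma_i(\hat I))$ by definition, this factor equals $\xileaf{\sigma_i(\hat I)}{z}$. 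The second factor, $\sum_{\pi_i \in \Pi_i(z)}\hat\mu(\pi_i)$, is by~\eqref{eq:sf definition} the sequence-form entry $y_{i,\hat I}(\sigma_i(z))$ of the strategy induced by $\hat\mu$. Putting these together, $u_i \ge \hat u_{\hat I}$ becomes precisely inequality~\eqref{eq:xi incentive efcce}, after flipping the sign of the (nonpositive) quantity.

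Two points need care rather than raw calculation. First, the direction of quantification: Proposition~\ref{prop:efcce} quantifies over \emph{all} $\vec y_{i,\hat I}\in\seqf{i}$ with $y_{i,\hat I}(\sigma(\hat I))=1$, whereas the incentive condition quantifies over all distributions $\hat\mu$ over $\Pi_i(\hat I)$. I would argue these ranges coincide: the map $\hat\mu \mapsto \vec y$ of~\eqref{eq:sf definition} sends distributions over $\Pi_i(\hat I)$ onto exactly the face $\{\vec y \in \seqf{i} : y(\sigma(\hat I)) = 1\}$ of the sequence-form polytope (this is the standard surjectivity of the sequence-form map restricted to plans that reach $\hat I$, together with the observation that reaching $\hat I$ with probability one is equivalent to $y(\sigma(\hat I))=1$; perfect recall is what makes the relevant sub-polytope of $\Pi_i(\hat I)$ map onto this face). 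Hence the "for all $\hat\mu$" condition and the "for all such $\vec y$" condition are equivalent. Second, I should check that the indexing $\xileaf{\sigma_i(\hat I)}{z}$ is well-defined for $z \in Z_{\hat I}$, i.e.\ that the tuple $(\sigma_1(z),\dots,\sigma_{i-1}(z),\sigma_i(\hat I),\sigma_{i+1}(z),\dots,\sigma_n(z))$ is pairwise-relevant and hence an actual coordinate of $\vec\xi$: this holds because for $z \in Z_{\hat I}$ the path to $z$ passes through a node of $\hat I$, so $\sigma_i(\hat I)$ is a prefix of $\sigma_i(z)$ and is therefore connected to every $\sigma_j(z)$ that $\sigma_i(z)$ was connected to. The main obstacle, such as it is, is precisely making this surjectivity-onto-the-face claim airtight and verifying the relevance bookkeeping; the rest is bookkeeping substitution identical in spirit to the NFCCE derivation.
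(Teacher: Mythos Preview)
Your proposal is correct and follows essentially the same route as the paper: the paper's argument is precisely the setup of $p_z$ and $q_z$ in~\eqref{eq:Z I prob}--\eqref{eq:Z not I prob}, followed by the one-line remark that substituting~\eqref{eq:xi definition} and~\eqref{eq:sf definition} into $u_i \ge \hat u_{\hat I}$ yields~\eqref{eq:xi incentive efcce}. Your write-up is in fact more careful than the paper on two points it leaves implicit---the surjectivity of $\hat\mu \mapsto \vec y$ onto the face $\{y(\sigma(\hat I))=1\}$ (which is what makes the two quantifications equivalent) and the pairwise-relevance check for the index $\xileaf{\sigma_i(\hat I)}{z}$---so nothing is missing. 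One cosmetic note: in the paper $q_z$ is defined only for $z\in Z\setminus Z_{\hat I}$, so when you write $\sum_{z\in Z_{\hat I}} u_i(z)(q_z - p_z)$ you are silently extending the same formula to $z\in Z_{\hat I}$; this is harmless but worth flagging if you write it up.
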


Inequality~\eqref{eq:xi incentive efcce} is again in the form $\vec{\xi}^{\!\top}\!\! \mat{A}_{i, \hat I} \vec{y}_{i, \hat I} - \vec{b}_{i, \hat I}^{\!\top} \vec{\xi} \le 0$ where $\mat{A}_{i, \hat I}$ and $\vec{b}_{i, \hat I}$ are suitable matrices/vectors that only depend on the trigger information set $\hat I$ of Player $i$. From here, one can follow the same steps that we already took in the case of NFCCE and obtain a bilinear saddle-point formulation and a linear program for EFCCE. For space reasons, we only state the linear program, which we also implemented and tested (see Experimental Evaluation section).

\refstepcounter{equation}
\label{lp:efcce}
\begin{equation*}
    \thetag\theequation : \mleft\{\!\!\begin{array}{rll}
        \max&\!\!\!\vec{c}^{\!\top}\!\vec{\xi} \\
        \text{s.t.} &\!\!\! u - w_{i, \hat I} - \vec{v}_{i, \hat I}^{\!\top} \vec{f}_i + \vec{b}_{i, \hat I}^{\!\top} \vec{\xi} \ge 0 & \!\!\forall i, \hat I \in \infos{i} \\[.5mm]
                        &\!\!\! \mat{F}_i^{\!\top} \vec{v}_{i, \hat I} + w_{i, \hat I} - \mat{A}_{i, \hat I}^{\!\top} \vec{\xi} \ge \vec{0} & \!\!\forall i, \hat I \in \infos{i} \\[.5mm]
                        &\!\!\! u \le 0\\[2mm]
                        &\!\!\! \vec{\xi} \in \Xi\\
                        &\!\!\! u \in \bbR, w_{i, \hat I} \in \bbR, \vec{v}_{i, \hat I} \in \bbR^{|\infos{i}|} &\!\!\forall i, \hat I \in \infos{i}.
    \end{array}\mright.
\end{equation*}

The linear program~\eqref{lp:efcce} has a polynomial number of variables, and in two-player game with no chance it has also a polynomial number of constraints because of the polynomial description of $\Xi$ \citep{Stengel08:Extensive}. In particular, in two-player games with no chance moves, a social-welfare-maximizing EFCCE can be computed in polynomial time by setting the objective function $\vec{c}^{\!\top}\! \vec{\xi}$ to be the social welfare
\[
  \vec{c}^{\!\top}\! \vec{\xi} \defeq \sum_{z\in Z} \mleft[\mleft(\sum_{i=1}^n u_i(z) \mright) \xi(\sigma_1(z), \dots, \sigma_n(z))\mright].
\]

Finally, we remark that the EFCCE linear program~\eqref{lp:efcce} has more constraints and variables than NFCCE (see \eqref{lp:nfcce}), but less than EFCE (see Supplemental Material). Empirically, this results in intermediate run times compared to NFCCE and EFCE, as confirmed by our experiments below.

\subsection{Complexity Results}

    In this section, we discuss complexity results relating to the problem of computing a social-welfare maximizing EFCCE.

    As we have already pointed out, in the case of two-player games without chance, the linear program in~\eqref{lp:efcce} has a polynomial number of constraints and variables, and can therefore be solved in polynomial time using standard LP technology. We now point out that, as in NFCCE and EFCE, the same does not hold in general for games with more than two players and/or chance moves. In particular, the following results can be easily obtained by using the same reduction employed by~\citet{Stengel08:Extensive}:

    \begin{definition}[$\text{SW}_{\text{EFCCE}}(\kappa)$]
    	Given an extensive-form game $\Gamma$ and a real number $\kappa$, $\text{SW}_\text{EFCCE}(\kappa)$ denotes the problem of deciding whether or not $\Gamma$ admits an EFCCE with social welfare at least $\kappa$.
    \end{definition}

    \begin{restatable}{proposition}{prophardnesstwoplchance}\label{prop:hardness 2 players chance}
        $\text{SW}_{\text{EFCCE}}(\kappa)$ is NP-Hard in two-player games with chance moves.
    \end{restatable}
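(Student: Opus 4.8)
The plan is to reuse, essentially unchanged, the reduction of~\citet{Stengel08:Extensive} that proves NP-hardness of computing a social-welfare-maximizing EFCE in two-player games with chance moves. That reduction takes an instance $\mathcal{I}$ of an NP-hard problem and produces in polynomial time a two-player extensive-form game $\Gamma_{\mathcal{I}}$ with chance moves, together with a threshold $\kappa_{\mathcal{I}}$, such that $\Gamma_{\mathcal{I}}$ admits an EFCE with social welfare at least $\kappa_{\mathcal{I}}$ if and only if $\mathcal{I}$ is a yes-instance. I would keep the same family of games $\Gamma_{\mathcal{I}}$ and the same thresholds. One direction is then immediate from the containments recalled earlier in the paper: every EFCE is an EFCCE, so if $\mathcal{I}$ is a yes-instance the EFCE of $\Gamma_{\mathcal{I}}$ witnessing social welfare $\geq \kappa_{\mathcal{I}}$ is in particular an EFCCE witnessing the same value, and yes-instances map to yes-instances of $\text{SW}_{\text{EFCCE}}(\kappa_{\mathcal{I}})$.

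The substantive direction is to show that in these specific games an EFCCE can reach social welfare $\geq \kappa_{\mathcal{I}}$ only when $\mathcal{I}$ is a yes-instance. The key point I would establish is that, in $\Gamma_{\mathcal{I}}$, the extra latitude an EFCCE has over an EFCE is vacuous: the gadget is built so that each non-chance player effectively acts at a single decision point (equivalently, the recommendation issued at an information set $\hat I$ conveys nothing a trigger agent could exploit), so committing to the recommended move \emph{before} seeing it is never worse for a deviator than inspecting it first. Concretely, one checks that in $\Gamma_{\mathcal{I}}$ the EFCCE incentive inequalities~\eqref{eq:xi incentive efcce} are equivalent to the EFCE incentive inequalities, whence the EFCCE and EFCE correlation-plan polytopes of $\Gamma_{\mathcal{I}}$ coincide and the two concepts attain the same maximum social welfare. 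Then $\text{SW}_{\text{EFCCE}}(\kappa_{\mathcal{I}})$ and $\text{SW}_{\text{EFCE}}(\kappa_{\mathcal{I}})$ have the same answer on $\Gamma_{\mathcal{I}}$, and NP-hardness transfers.

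The main obstacle is exactly this verification: confirming that EFCE $=$ EFCCE on every game the reduction outputs. If the gadget gives some player several nested information sets, the plain argument fails and one would instead modify the construction slightly --- for example by inserting, just before each decision point, a chance move that publicly reveals which information set the acting player is in. This ``flattens'' the player's play into independent one-shot decisions and renders the before-versus-after-seeing-the-recommendation distinction immaterial; the modification is polynomial and preserves the threshold, so the reduction still goes through. An alternative route that simultaneously handles NFCCE is to show directly that EFCE, EFCCE, and NFCCE all attain the same optimal social welfare on $\Gamma_{\mathcal{I}}$ (i.e., that the ``no'' direction of the original argument already rules out an NFCCE above the threshold); since EFCCE is sandwiched between EFCE and NFCCE, this pins its optimum down as well.
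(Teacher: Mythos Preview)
Your overall strategy is sound, and the ``alternative route'' you sketch at the end is exactly what the paper does. In the von Stengel--Forges SAT gadget (chance picks a clause uniformly, Player~1 picks a literal in that clause, Player~2 picks a truth value for the corresponding variable, payoff $(1,1)$ if the literal is satisfied and $(0,0)$ otherwise), the threshold is $\kappa=2$, which equals the maximum conceivable social welfare in this common-interest game. Hence the ``no'' direction needs no comparison of EFCE and EFCCE at all: if the formula is unsatisfiable, then for every truth assignment some clause has no satisfied literal, so \emph{every} correlated distribution over $\Pi_1\times\Pi_2$ (EFCCE, NFCCE, or completely unconstrained) puts positive weight on a $(0,0)$ leaf and has social welfare strictly below~$2$. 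For the ``yes'' direction the paper uses the even simpler containment $\text{Nash}\subseteq\text{EFCCE}$ (a satisfying assignment yields a pure Nash equilibrium with social welfare $2$); your $\text{EFCE}\subseteq\text{EFCCE}$ works just as well.

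Your primary route---arguing that the EFCE and EFCCE incentive constraints literally coincide on the gadget---is more than is needed and is not obviously correct as stated. In the gadget each player has several parallel (non-nested) information sets, not ``effectively a single decision point'', and a recommendation at one of Player~1's clause information sets can in principle be correlated with Player~2's assignment, so an EFCE trigger agent may be strictly better informed than an EFCCE one; the two correlation-plan polytopes need not be equal. What the reduction actually requires is only that they attain the same optimal social welfare, and that follows immediately from the maximum-threshold observation above. So drop the polytope-coincidence verification and lead with what you called the alternative.
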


    \begin{restatable}{proposition}{prophardnessthreepl}\label{prop:hardness 3 players}
        $\text{SW}_{\text{EFCCE}}(\kappa)$ is NP-Hard in three-player games, with or without chance moves.
    \end{restatable}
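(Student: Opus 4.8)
The plan is to re-use, essentially unchanged, the NP-hardness reductions of \citet{Stengel08:Extensive} that handle the analogous question for EFCE: one reduction builds, in polynomial time, from an instance $\phi$ of a known NP-hard problem a two-player extensive-form game $\Gamma_\phi$ \emph{with chance moves} together with a threshold $\kappa_\phi$ such that $\Gamma_\phi$ admits an EFCE of social welfare at least $\kappa_\phi$ if and only if $\phi$ is a yes-instance; a second reduction does the same with a three-player game (with or without chance). First I would reproduce those constructions verbatim and argue that $(\Gamma_\phi,\kappa_\phi)$ is also a valid certificate for $\text{SW}_{\text{EFCCE}}(\cdot)$, which would establish \cref{prop:hardness 2 players chance} and \cref{prop:hardness 3 players} simultaneously.

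One implication is immediate. Since EFCE $\subseteq$ EFCCE --- the trigger-agent constraints of \cref{prop:efcce} are a relaxation of the EFCE incentive constraints, because an EFCCE deviator must commit to abandoning the mediator at its trigger information set \emph{before} seeing the recommendation there and is therefore strictly weaker than the corresponding EFCE deviator, who commits \emph{after} --- any EFCE of $\Gamma_\phi$ with social welfare $\ge \kappa_\phi$ is in particular an EFCCE with social welfare $\ge \kappa_\phi$. Hence if $\phi$ is a yes-instance then $\Gamma_\phi$ has an EFCCE of value at least $\kappa_\phi$.

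The content of the proof is the converse: I must show that an EFCCE of $\Gamma_\phi$ with social welfare $\ge \kappa_\phi$ can exist only when $\phi$ is a yes-instance. Here I would open up the internals of the von Stengel--Forges gadgets. In those games the private information that would make a mediator recommendation informative is revealed only after the decision points at which a profitable unilateral deviation could occur; consequently, on these instances, the EFCCE incentive constraint \eqref{eq:xi incentive efcce} at each trigger information set coincides with the corresponding EFCE incentive constraint (the best continuation after a defection does not depend on the suppressed recommendation), so the EFCCEs and the EFCEs of $\Gamma_\phi$ are exactly the same set and the original soundness argument transfers with no change. Should this coincidence fail to hold literally for the published gadget, it can be forced by a trivial padding step: prepend to every information set at which a deviation could be profitable a dummy single-action move, so that the decision of whether to follow the mediator there is necessarily taken before any recommendation is seen; this alters neither the payoffs nor the threshold, is polynomial-time computable, and makes EFCCE and EFCE coincide on the padded game.

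I expect the verification in the previous paragraph to be the only non-routine part: one has to inspect the reduction of \citet{Stengel08:Extensive} deviation by deviation and confirm that coarsening the deviation model neither introduces spurious high-welfare equilibria (which would break soundness) nor destroys the intended one (completeness, already handled above). Everything else --- the polynomial-time computability of $\Gamma_\phi$ and $\kappa_\phi$, and the NP-hardness of the source problem --- is inherited directly from \citet{Stengel08:Extensive}.
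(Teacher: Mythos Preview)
Your high-level plan---reuse the von Stengel--Forges SAT gadget and check both directions for EFCCE---is exactly what the paper does: it constructs the same game (the chance node at the root is replaced by a third player who receives payoff $1-u_1$) and the same threshold $\kappa=2$, and for the ``yes'' direction it notes, even more simply than you do, that a satisfying assignment yields a \emph{pure Nash} equilibrium of social welfare $2$, hence an EFCCE.

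Where you diverge is in the soundness direction. You propose to establish that on the gadget the EFCCE and EFCE constraints \emph{coincide}, so that the EFCE soundness argument carries over unchanged. But this coincidence is a property of the particular correlated distribution $\mu$, not of the game: Player~3's EFCE trigger at the root conditions on the recommended clause, which under a general $\mu$ is correlated with $(\pi_1,\pi_2)$, whereas the EFCCE trigger does not; there is no structural reason the two constraint families should cut out the same polytope here. Your padding fallback does not repair this: prepending a single-action dummy information set before $I$ adds only a vacuous constraint and leaves the genuine EFCE constraint at $I$ untouched (the player still sees the recommendation at $I$ before deciding there), so EFCE and EFCCE on the padded game remain distinct in general.

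The paper sidesteps this entirely by arguing soundness directly from Player~3's single \emph{coarse} constraint. If an EFCCE $\mu$ had social welfare $2$, then $u_1=1$ on the support and Player~3's on-path payoff is $0$. The EFCCE inequality at Player~3's root information set then says that committing upfront to any fixed clause $\phi$ cannot raise Player~3's payoff above $0$; this forces every $(\pi_1,\pi_2)$ in the marginal of $\mu$ to satisfy the literal $\pi_1(\phi)\in\phi$, for \emph{every} clause $\phi$, so $\pi_2$ is a satisfying assignment---contradiction. No conditioning on the recommended clause (i.e., no EFCE-specific information) is ever used. Replace your ``EFCCE${}={}$EFCE on the gadget'' step with this one-line check and your proof is complete.
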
 
    \section{Relationships Between Equilibria}

\placepayoffspacefigurehere 

In this section, we analyze some relationships between EFCE, EFCCE and
NFCCE. We start with the following inclusion lemma, which shows that the
solution concept that we just introduced, EFCCE, is a superset of EFCE and
a subset of NFCCE (a proof of all propositions is available in the Supplemental Material):

\begin{restatable}{proposition}{propinclusionequilibria}\label{prop:inclusion of equilibria}
    Let $\Gamma$ be a perfect-recall extensive-form game. Then we have the
    following inclusion of equilibria
    \[
      \text{EFCE} \subseteq \text{EFCCE} \subseteq \text{NFCCE}.
    \]
\end{restatable}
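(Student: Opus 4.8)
The plan is to prove both inclusions at the level of the correlation‑plan characterizations: \cref{prop:nfcce} for NFCCE, \cref{prop:efcce} for EFCCE, and the analogous $\xi$‑space incentive characterization of EFCE (in the Supplemental Material, obtained by the trigger‑agent argument of \citet{Farina19:Correlation}). The common mechanism is that every incentive inequality of the coarser concept is the \emph{sum} of incentive inequalities of the finer concept, ranging over a suitable partition of the leaves affected by the deviation. Thus a $\vec{\xi}\in\Xi$ satisfying all of the finer inequalities also satisfies all of the coarser ones, which gives the inclusion of equilibrium sets. This reflects, at the level of deviations, the fact that a deviator in the coarser model is exactly a bundle of trigger agents in the finer model (one per root information set, resp.\ one per action at the trigger information set), each of which can mimic the coarse deviator's continuation.

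For $\text{EFCCE}\subseteq\text{NFCCE}$: fix a player $i$ and a sequence‑form strategy $\vec{y}_i\in\seqf{i}$, and look at~\eqref{eq:xi incentive nfcce}. Leaves $z$ with $\sigma_i(z)=\emptyseq_i$ contribute the same term to both sides (there $y_i(\emptyseq_i)=1$ and $\xileaf{\emptyseq_i}{z}=\xileaf{\sigma_i(z)}{z}$), so they can be discarded; the remaining leaves are $\bigsqcup_{\hat I}Z_{\hat I}$ over the \emph{root} information sets $\hat I$ of Player $i$ (those with $\sigma(\hat I)=\emptyseq_i$), since the path to any leaf at which Player $i$ moves passes through a unique such $\hat I$. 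For each root $\hat I$, the portion of~\eqref{eq:xi incentive nfcce} supported on $Z_{\hat I}$ is precisely~\eqref{eq:xi incentive efcce} instantiated at $\hat I$ with continuation $\vec{y}_{i,\hat I}\defeq\vec{y}_i$: this is admissible because $\sigma(\hat I)=\emptyseq_i$ forces $y_i(\sigma(\hat I))=1$, and $\sigma_i(\hat I)=\emptyseq_i$ makes the two deviation terms coincide. Summing these EFCCE inequalities over all root $\hat I$ reproduces~\eqref{eq:xi incentive nfcce}, so every EFCCE satisfies every NFCCE inequality.

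For $\text{EFCE}\subseteq\text{EFCCE}$: fix a player $i$, an information set $\hat I\in\infos{i}$, and a continuation $\vec{y}_{i,\hat I}\in\seqf{i}$ with $y_{i,\hat I}(\sigma(\hat I))=1$, and look at~\eqref{eq:xi incentive efcce}. Partition $Z_{\hat I}$ into the blocks $Z_{\hat I,a}\defeq\{z\in Z_{\hat I}:\sigma_i(z)\text{ extends }(\hat I,a)\}$ for $a\in A_{\hat I}$. For each $a$, the EFCE incentive inequality at the trigger sequence $(\hat I,a)$ with the \emph{same} continuation $\vec{y}_{i,\hat I}$ (the EFCE trigger agent is free to ignore the recommended action) has the form $\sum_{z\in Z_{\hat I}}u_i(z)\,\xileaf{(\hat I,a)}{z}\,y_{i,\hat I}(\sigma_i(z))\le\sum_{z\in Z_{\hat I,a}}u_i(z)\,\xileaf{\sigma_i(z)}{z}$. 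Summing over $a\in A_{\hat I}$ and using (i) the $\Xi$‑consistency identity $\sum_{a\in A_{\hat I}}\xileaf{(\hat I,a)}{z}=\xileaf{\sigma_i(\hat I)}{z}$, which holds because $\Pi_i(\hat I)=\bigsqcup_{a}\Pi_i((\hat I,a))$, on the left, and (ii) $Z_{\hat I}=\bigsqcup_{a}Z_{\hat I,a}$ on the right, reproduces exactly~\eqref{eq:xi incentive efcce}; so every EFCE is an EFCCE.

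The obstacle is not conceptual but is a matter of carefully pinning down the EFCE incentive constraint in $\xi$‑space — in particular, verifying that for a trigger sequence $(\hat I,a)$ the ``deviate'' side ranges over all of $Z_{\hat I}$ with the entry $\xileaf{(\hat I,a)}{z}$, while the ``commit'' side ranges only over $Z_{\hat I,a}$ with $\xileaf{\sigma_i(z)}{z}$ — and then checking that the two partitions ($\{Z_{\hat I}\}$ over root information sets, $\{Z_{\hat I,a}\}$ over actions at $\hat I$) together with the identity $\sum_a\xileaf{(\hat I,a)}{z}=\xileaf{\sigma_i(\hat I)}{z}$ make the coarse inequality split term‑for‑term. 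One also has to handle the degenerate cases — leaves at which Player $i$ never moves, and trigger information sets that are unreachable — so that those terms cancel, and to confirm that each continuation used in a sub‑inequality (e.g.\ $\vec{y}_{i,\hat I}=\vec{y}_i$ when $\hat I$ is a root information set) is admissible in the finer characterization.
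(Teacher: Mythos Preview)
Your proposal is correct and follows essentially the same approach as the paper's proof: for each inclusion you partition the relevant set of leaves (by root information sets for $\text{EFCCE}\subseteq\text{NFCCE}$, by actions at $\hat I$ for $\text{EFCE}\subseteq\text{EFCCE}$), apply the finer concept's incentive inequality on each block with the same continuation strategy, and sum using the $\Xi$-consistency identity $\sum_{a\in A_{\hat I}}\xileaf{(\hat I,a)}{z}=\xileaf{\sigma_i(\hat I)}{z}$. You are in fact slightly more careful than the paper in explicitly handling the leaves with $\sigma_i(z)=\emptyseq_i$ and in stating the precise domains of the two sides of the EFCE inequality.
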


\cref{prop:inclusion of equilibria} applies to games with more
than two players and/or chance moves as well.
Let $U_\text{EFCE},
U_\text{EFCCE}, U_\text{NFCCE}$ denote the set of expected payoff vectors
that can be induced by EFCE, EFCCE and NFCCE, respectively. Then, \cref{prop:inclusion of equilibria} is an important ingredient for our next proposition:

\begin{restatable}{proposition}{propinclusionpayoffs}\label{prop:inclusion of payoffs}
  The sets $U_\text{EFCE}, U_\text{EFCCE}, U_\text{NFCCE}$ are convex
  polytopes. Furthermore,
  $U_\text{EFCE} \subseteq U_\text{EFCCE} \subseteq U_\text{NFCCE}$.
\end{restatable}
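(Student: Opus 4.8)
\section*{Proof proposal}

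The plan is to realize each payoff set as the image of a polytope under a fixed linear map, and then invoke monotonicity of images under set inclusion together with the already-established \cref{prop:inclusion of equilibria}.

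First I would observe that, for each of the three solution concepts, the set of extensive-form correlation plans $\vec{\xi} \in \Xi$ that correspond to an equilibrium of that type is a convex polytope. For NFCCE this is immediate from \cref{prop:nfcce}: the NFCCE correlation plans are exactly the $\vec{\xi} \in \Xi$ satisfying \eqref{eq:xi incentive nfcce} for every player $i$ and every $\vec{y}_i \in \seqf{i}$; since $\vec{\xi}^{\!\top}\!\!\mat{A}_i \vec{y}_i - \vec{b}_i^{\!\top}\!\vec{\xi}$ is linear in $\vec{y}_i$ and $\seqf{i}$ is a polytope, its maximum over $\seqf{i}$ is attained at a vertex, so the condition is equivalent to finitely many linear inequalities in $\vec{\xi}$; intersecting these with the polytope $\Xi$ yields a polytope. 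The argument for EFCCE is identical, using \cref{prop:efcce} and ranging over the finitely many vertices of $\seqf{i}$ subject to $y_{i,\hat I}(\sigma(\hat I)) = 1$. For EFCE the analogous polyhedral characterization is known~\citep{Stengel08:Extensive,Farina19:Correlation}. Denote these polytopes $P_{\text{EFCE}}, P_{\text{EFCCE}}, P_{\text{NFCCE}}$; by \cref{prop:inclusion of equilibria} we have $P_{\text{EFCE}} \subseteq P_{\text{EFCCE}} \subseteq P_{\text{NFCCE}}$.

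Next I would note that the expected payoff vector is a linear function of the correlation plan. Indeed, Equation~\eqref{eq:utility nfcce} together with definition~\eqref{eq:xi definition} gives $u_i = \sum_{z \in Z} u_i(z)\,\xi(\sigma_1(z), \dots, \sigma_n(z))$, which depends linearly on $\vec{\xi}$ (and only on $\vec{\xi}$, not on the particular $\mu$ inducing it). Hence there is a fixed matrix $\mat{L} \in \bbR^{n \times d}$, with $d$ the dimension of the ambient space of $\Xi$, such that the payoff vector induced by $\vec{\xi}$ is $\mat{L}\vec{\xi}$, and $U_{\text{EFCE}} = \{\mat{L}\vec{\xi} : \vec{\xi} \in P_{\text{EFCE}}\}$, and likewise for EFCCE and NFCCE. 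To conclude I would invoke two elementary facts: the image of a polytope under a linear map is again a convex polytope (a polytope is the convex hull of finitely many vertices, and the image is the convex hull of their images), and images are monotone, i.e.\ $A \subseteq B$ implies $\mat{L}(A) \subseteq \mat{L}(B)$. The first fact shows $U_{\text{EFCE}}, U_{\text{EFCCE}}, U_{\text{NFCCE}}$ are convex polytopes; combined with the chain $P_{\text{EFCE}} \subseteq P_{\text{EFCCE}} \subseteq P_{\text{NFCCE}}$, the second gives $U_{\text{EFCE}} \subseteq U_{\text{EFCCE}} \subseteq U_{\text{NFCCE}}$.

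I do not expect a genuine obstacle; the only points needing care are the reduction of the ``for all $\vec{y}_i \in \seqf{i}$'' conditions to finitely many inequalities (so the correlation-plan sets are genuinely polyhedral) and the observation that the payoff depends on $\vec{\xi}$ alone. A fully self-contained variant that avoids citing the EFCE polytope would work directly in the space of correlated distributions $\mu$ over $\bigtimes_{i=1}^n \Pi_i$: that space is a simplex, the incentive constraints defining each notion become linear in $\mu$ once the deviation is restricted to range over deterministic plans (the vertices of the relevant simplices), so the equilibria of each type form a polytope in $\mu$-space satisfying the inclusions of \cref{prop:inclusion of equilibria}, and the payoff map $\mu \mapsto (u_1, \dots, u_n)$ is linear; applying it yields the same conclusion.
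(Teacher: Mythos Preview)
Your proposal is correct and follows essentially the same route as the paper: show that each equilibrium set is a polytope in $\vec{\xi}$-space, observe that the payoff map $\vec{\xi}\mapsto(u_1,\dots,u_n)$ is linear, and conclude via the image-of-a-polytope argument together with \cref{prop:inclusion of equilibria}. You are in fact more careful than the paper in spelling out why the ``for all $\vec{y}_i\in\seqf{i}$'' constraints reduce to finitely many linear inequalities (via vertices of $\seqf{i}$), a step the paper leaves implicit.
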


\cref{prop:inclusion of equilibria} also implies the following relationship between the maximum social welfare that can be obtained by EFCE, EFCCE and NFCCE:

\begin{corollary}\label{cor:sw inequalities}
    Let
    $\text{SW}^*_\text{EFCE},\text{SW}^*_\text{EFCCE},\text{SW}^*_\text{NFCCE}$
    denote the maximum social welfare that can be reached by EFCE, EFCCE and
    NFCCE, respectively. Then, one has the inequality
    $
        \text{SW}^*_\text{EFCE} \le \text{SW}^*_\text{EFCCE} \le \text{SW}^*_\text{NFCCE}.
    $
\end{corollary}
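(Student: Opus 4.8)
The plan is to derive the corollary directly from the set inclusions already established, exploiting the fact that social welfare is one and the same linear functional in all three settings. Recall from \cref{sec:saddle_point_nfcce} that for a correlated distribution $\mu$ inducing correlation plan $\vec{\xi}$ via~\eqref{eq:xi definition}, social welfare equals $\sum_{z\in Z}\bigl(\sum_{i=1}^n u_i(z)\bigr)\xi(\sigma_1(z),\dots,\sigma_n(z))$, i.e., a fixed linear function $\vec{c}^{\!\top}\!\vec{\xi}$ of $\vec{\xi}$ that does not depend on which equilibrium notion $\vec{\xi}$ happens to satisfy. Consequently $\text{SW}^*_\text{EFCE}$, $\text{SW}^*_\text{EFCCE}$ and $\text{SW}^*_\text{NFCCE}$ are the maxima of this \emph{same} functional over, respectively, the set of correlation plans induced by EFCEs, by EFCCEs, and by NFCCEs.

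First I would invoke \cref{prop:inclusion of equilibria}, which gives $\text{EFCE}\subseteq\text{EFCCE}\subseteq\text{NFCCE}$; under the map $\mu\mapsto\vec{\xi}$ this yields the corresponding inclusions of the three sets of correlation plans. Then the elementary fact that the maximum of a fixed function over a smaller set cannot exceed its maximum over a larger set immediately gives $\text{SW}^*_\text{EFCE}\le\text{SW}^*_\text{EFCCE}\le\text{SW}^*_\text{NFCCE}$. Alternatively, one can phrase the identical argument at the level of payoff vectors: social welfare is the linear functional $\vec{u}\mapsto\mathbf{1}^{\!\top}\vec{u}$, and \cref{prop:inclusion of payoffs} gives $U_\text{EFCE}\subseteq U_\text{EFCCE}\subseteq U_\text{NFCCE}$, so maximizing this functional over the three nested polytopes yields the claim.

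There is essentially no hard step here. The only things to verify are that the three sets are nonempty—true because an NFCCE always exists (hence so do the larger sets), and an EFCE always exists as well—so that the maxima are actually attained, and that ``social welfare'' denotes literally the same quantity $\sum_{i=1}^n u_i$ in each case, which is clear from its definition in terms of the induced distribution over leaves. Given \cref{prop:inclusion of equilibria} (equivalently \cref{prop:inclusion of payoffs}), the corollary is immediate.
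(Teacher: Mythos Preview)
Your proposal is correct and matches the paper's approach: the paper states this corollary as an immediate consequence of \cref{prop:inclusion of equilibria} without further argument, and your write-up simply spells out the one-line reasoning (same linear functional, nested feasible sets). One small slip: the parenthetical ``an NFCCE always exists (hence so do the larger sets)'' is backwards since NFCCE is already the largest set, but your subsequent remark that an EFCE always exists is what actually ensures all three sets are nonempty, so the argument stands.
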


\cref{fig:sheriff goofspiel payoffs} shows the set of payoff vectors that can
be induced by EFCE, EFCCE and NFCCE in an instance of the Sheriff
game~\citep{Farina19:Correlation} (left) and an instance of a 3-card
Goofspiel game~\citep{Ross71:Goofspiel} (right).\footnote{The polytopes of
reachable payoffs were computed with the help of Polymake, a tool for
computational polyhedral
geometry~\citep{Gawrilow00:Polymake,Assarf17:Computing}.} In the Sheriff game
instance, we have that both inclusions in \cref{prop:inclusion of payoffs} are
strict, while in the Goofspiel game only the inclusion $U_\text{EFCE}
\subsetneq U_\text{EFCCE}$ is strict, while
$U_\text{EFCCE} = U_\text{NFCCE}$. The appendix contains an instance of a
Battleship game~\citep{Farina19:Correlation} in which only the inclusion
$U_\text{EFCCE} \subsetneq U_\text{NFCCE}$ is strict, while
$U_\text{EFCE} = U_\text{EFCCE}$.

In the game of \ref{fig:sheriff goofspiel payoffs} (left) the inequalities
between the values of the maximum social welfare (\cref{cor:sw inequalities})
are both strict, whereas in \ref{fig:sheriff goofspiel payoffs} (right) both
inequalities are equalities.

    \section{Experimental Evaluation}

\placeexperimentshere

We experimentally compare NFCCE, EFCCE and EFCE both in terms of maximum social welfare and run time.

In our experiments, we use instances from three different two-player games with no chance moves: Sheriff~\citep{Farina19:Correlation}, Battleship~\citep{Farina19:Correlation} and Goofspiel~\citep{Ross71:Goofspiel}, whose full descriptions are available in the Supplemental Material. Sheriff is a bargaining game, in which two players---respectively, the Smuggler and the Sheriff---must settle on an appropriate bribe so as to avoid the Sheriff inspecting the Smuggler's cargo, which might or might not contain illegal items. Battleship is a parametric version of the classic board game, where two competing fleets take turns at shooting at each other. Finally, Goofspiel is a card game in which two players repeatedly bid to win a common public card, which will be discarded in case of bidding ties. The three games were chosen as to illustrate three different applications in which an intermediate form of centralized control (the correlation device) is beneficial: bargaining in Sheriff, conflict resolution in Battleship, and bidding in Goofspiel.

We used Gurobi 8.1.1~\cite{Gurobi} to solve the linear programs~\eqref{lp:nfcce} for NFCCE,~\eqref{lp:efcce} for EFCCE, and~\eqref{lp:efce} for EFCE (the latter is given in the Supplemental Material). We use the barrier algorithm without crossover, and we let Gurobi automatically the recommended number of threads for execution.
All experiments were run on a 64-core machine with 512 GB of RAM.

Our experimental results are available in \cref{table:battleship} for Battleship, \cref{table:sheriff} for Sheriff and \cref{table:goofspiel} in the Supplemental Material for Goofspiel. Each table is split into three parts. \textbf{Part (a)} contains information about the parameters that were used to generate the game instances (refer to the Supplemental Material for a detailed description of their effects). It also shows the size of the instances in terms of number of sequences pairs, defined as the product $|\Sigma_1| \times |\Sigma_2|$ of the number of sequences of the players, and number of \emph{relevant} pairs of sequences (see Preliminaries section). \textbf{Part (b)} compares the run times of our algorithm (column `LP'). In the case of NFCCE, we also compare against the only known polynomial-time algorithms to compute social-welfare-maximizing NFCCE in extensive-form games, which are both based on the column generation technique, and have been introduced by~\citet{Celli19:Computing}.
In particular, we implemented both the algorithm based on a linear programming oracle, denoted `CG-LP' in the tables, and the `more practical' algorithm `CG-MILP' based on a mixed integer linear programming oracle that is proposed by~\citet{Celli19:Computing} as a practically faster approach for most applications. CG-LP is guaranteed to compute a social-welfare-maximizing NFCCE in polynomial time, whereas CG-MILP requires exponential time in the worst case.
Finally, \textbf{Part (c)} reports the value of the maximum social welfare that can be attained by NFCCE, EFCCE and EFCE.

\subsection{Comparison of Run Time}

As expected, increasing the coarseness of the equilibrium---from EFCE to EFCCE to NFCCE---reduces the linear program size which results in a smaller run time. Empirically, the NFCCE linear program is up to four times faster than the EFCCE linear program, and the EFCCE linear program is in turn between two to four times faster than the EFCE linear program.
Furthermore, our results indicate that the NFCCE linear program that we develop in~\eqref{lp:nfcce} is two to four orders of magnitude faster than CG-LP and CG-MILP, and it is able to scale to game instances even up to five orders of magnitude larger than CG-LP and CG-MILP can in 24 hours. We believe that this difference in performance is partly due to the fact that the algorithms by~\citet{Celli19:Computing} have a number of variables that scales with the total number $|\seqs{1}|\times |\seqs{2}|$ of sequence pairs in the game, whereas our linear programming formulation has a number of columns that grows with the number of \emph{relevant} sequence pairs, which is only a tiny fraction of the total number of sequence pairs in practice.

\subsection{Comparison of Maximum Social Welfare}

Our results experimentally confirm~\cref{cor:sw inequalities}: as the coarseness of the equilibrium increases from EFCE to EFCCE to NFCCE, so does the value of the maximum social welfare that the mediator can induce.
The maximum social welfare attained by NFCCE is strictly larger than EFCCE and EFCE in Battleship (\cref{table:battleship}) and Sheriff (\cref{table:sheriff}), while it is the same in Goofspiel (\ref{table:goofspiel} in the Supplemental Material).

Experimentally, the maximum social welfare that can be obtained through EFCCE is often equal to the maximum social welfare that can be obtained through EFCE. While this does not imply that the set of reachable payoffs is the same (see \cref{fig:sheriff goofspiel payoffs}), it is an indication of the fact that EFCCE is a tight relaxation of EFCE which can be solved up to four times faster than EFCE in practice.

    \section{Conclusions}

In this paper we studied two instantiations of the idea of coarse correlation in extensive-form games: normal-form coarse-correlated equilibrium and extensive-form coarse-correlated equilibrium. For both solution concepts, we gave saddle-point problem formulations and linear programs. 

We proved that EFCCE, which we introduced for the first time, is an intermediate solution concept between NFCCE and the extensive-form correlated equilibrium introduced by~\citet{Stengel08:Extensive}. In particular, the set of payoffs that can be reached by EFCCE is always a superset of those that can be reached by EFCE, and a subset of those that can be reached by NFCCE. 
Empirically, EFCCE is a tight relaxation of EFCE, and a social-welfare-maximizing EFCCE can be computed up to four times faster than EFCE. This suggests that EFCCE could be a suitable and faster alternative in algorithms that rely on EFCE, such as the algorithm by~\citet{Cerny18:Incremental} in the context of Stackelberg games.

Finally, we compared the run time of our algorithm for computing social-welfare-maximizing NFCCEs, and showed that it is two to four orders of magnitude faster than the only previously known algorithms by~\citet{Celli19:Computing}. Our algorithm can also scale to game instances up to five orders of magnitude larger than the prior state of the art, thus enabling the computation of coarse-correlated solution concepts in medium-sized extensive-form games for the first time.

    \section*{Acknowledgments}
    This material is based on work supported by the National Science Foundation under grants IIS-1718457, IIS-1617590, IIS-1901403, and CCF-1733556, and the ARO under award W911NF-17-1-0082. Gabriele Farina is supported by a Facebook Fellowship.

    \bibliographystyle{custom_arxiv}
    \bibliography{dairefs}

\iftrue
    \clearpage
    \twocolumn[
        \begin{center}
            \huge\bf Supplemental Material
            \vspace{2cm}
        \end{center}
    ]
    \appendix
    \section{Formulation of EFCE}

In this section, we show that an EFCE can also be expressed as the solution to a bilinear saddle-point problem. To do so, we resort again to the idea of \emph{trigger agents}~\citep{Gordon08:No,Dudik09:Sampling}, slightly modifying the definition of trigger agent that we have given in the section about EFCCE to allow for deviations to happen \emph{after} the recommendations have been received:

\begin{definition}
    Let $i \in \plset$ be a player, let $\hat \sigma = (\hat I, \hat a) \in \seqs{i}$ be a sequence for Player $i$, and let $\hat \mu$ be a probability distribution over $\Pi_i(\hat I)$. An \emph{$(\hat \sigma, \hat \mu)$-trigger agent for Player $i$} is a player that follows all recommendations issued by the mediator unless they get recommended to play $\hat a$ at information set $\hat I$; if this happens, the player `gets triggered', stops following the recommendations and instead plays according to a reduced-normal-form plan sampled from $\hat \mu$ until the game ends.
\end{definition}

By definition, a correlated distribution $\mu$ over $\bigtimes_{i=1}^n \Pi_i$ is an EFCE when, for all $i \in \plset$, the value $u_i$ that Player $i$ obtains by following the recommendations is at least as large as the expected utility $\hat u_{\hat \sigma}$ attained by any $(\hat \sigma, \hat \mu)$-trigger agent for that player (assuming nobody else deviates).
The expected utility for Player $i$ when everybody follows the mediator's recommendations is as in Equation~\eqref{eq:utility nfcce}. On the other hand, the expected utility for the $(\hat \sigma, \hat \mu)$-trigger agent can be computed similarly to the one for the $(\hat I, \hat \mu)$-trigger agent that we have computed in the section of EFCCE. We thus compute the probability of the game ending in any terminal node $z \in Z$, this time distinguishing three cases:
\begin{itemize}[nolistsep,itemsep=1mm]
  \item The path from the root to $z$ includes playing action $\hat a$ at information set $\hat I$. We denote the set of such leaves as $Z_{\hat \sigma}$. In this case, the trigger agent follows all recommendations until $\hat a$ get recommended, and then plays according to a reduced-normal-form plan $\hat \pi \in \Pi_i(\hat I)$ sampled from the distribution $\hat \mu$ from $\hat I$ onwards. To reach leaf $z$, however, we need to have that the reduced-normal-form plan $\hat \pi$ includes playing $\hat a$ at $\hat I$. Hence, the following conditions are necessary and sufficient for the game to terminate at $z$: $\pi_j \in \Pi_j(z)$ for all $j \in \plset \setminus \{i\}$, $\pi_i \in \Pi_i(\hat \sigma)$, and $\hat \pi \in \Pi_i(z)$. Correspondingly, the probability that the game ends at $z \in Z_{\hat \sigma}$ is
      \begin{equation}\label{eq:Z sigma prob efce}
        p_{z} \defeq \mleft(\sum_{\substack{\pi_i \in \Pi_i(\hat \sigma)\\\pi_{j} \in \Pi_{j}(z)\ \forall j\neq i}} \hspace{-5mm}\mu(\pi_1, \dots, \pi_{n})\!\mright)\!\mleft(\sum_{\pi_i\in\Pi_i(z)}\!\!\! {\hat\mu}(\pi_i)\!\mright).
      \end{equation}

  \item The path from the root to $z$ passes through information set $\hat I$ but does not include playing action $\hat a$ at $\hat I$. The game can end in this state both if the mediator recommends all the players to play in order to reach $z$, and thus the trigger agent never gets triggered, or if the mediator recommends to play $\hat a$ at $\hat I$ but then the trigger agent deviates and plays according to the reduced-normal-form plan $\hat \pi \in \Pi_i(z)$ sampled from the distribution $\hat \mu$. Hence, the probability that the game ends at $z \in Z_{\hat I} \setminus Z_{\hat \sigma}$ is the sum of two terms, as follows:
    \begin{align*}\label{eq:Z I not sigma prob efce}
      q_{z} \defeq &\sum_{\pi_{j} \in \Pi_{j}(z)\ \forall j} \hspace{-2mm}\mu(\pi_1, \dots, \pi_n) \\
      + &\mleft(\sum_{\substack{\pi_i \in \Pi_i(\hat \sigma)\\\pi_{j} \in \Pi_{j}(z)\ \forall j\neq i}} \hspace{-5mm}\mu(\pi_1, \dots, \pi_{n})\!\mright)\!\mleft(\sum_{\pi_i\in\Pi_i(z)}\!\!\! {\hat\mu}(\pi_i)\!\mright).
    \end{align*}

  \item Otherwise, the trigger agent never gets triggered, and instead followings all recommended moves until the end of the game. The probability that the game ends at $z \in Z \setminus Z_{\hat I}$ is therefore
      \begin{equation}\label{eq:Z not I prob efce}
        r_z \defeq \sum_{\pi_{j} \in \Pi_{j}(z)\ \forall j} \hspace{-2mm}\mu(\pi_1, \dots, \pi_n).
      \end{equation}
\end{itemize}
With this information, the expected utility of the $(\hat \sigma, \hat \mu)$-trigger agent is computed as
\begin{align*}
    \hat u_{\hat \sigma} = \sum_{z \in Z_{\hat \sigma}} u_i(z)\, p_z + \sum_{z \in Z_{\hat I} \setminus Z_{\hat \sigma}} u_i(z)\, q_z + \sum_{z \in Z \setminus Z_{\hat I}} u_i(z)\, r_z.
\end{align*}

By following the same steps that have already been taken for NFCCE and EFCCE, one can now write the constraint defining an EFCE in the following, compact way:

\begin{align}\label{eq:xi incentive efce}
    &  \sum_{z\in Z_{\hat \sigma}} u_i(z) \xileaf{\hat \sigma}{z} y_{i, \hat \sigma}(\sigma_i(z)) \nonumber\\[-3mm]
    & \hspace{2.6cm} \le \sum_{z\in Z_{\hat \sigma}} u_i(z) \xileaf{\sigma_i(z)}{z},
\end{align}

which needs to hold for all players $i \in \plset$ and sequences $\hat \sigma \in \seqs{i}$. Inequality~\eqref{eq:xi incentive efce} is in the form $\vec{\xi}^{\!\top}\!\! \mat{A}_{i, \hat \sigma} \vec{y}_{i, \hat \sigma} - \vec{b}_{i, \hat \sigma}^{\!\top} \vec{\xi} \le 0$ where $\mat{A}_{i, \hat \sigma}$ and $\vec{b}_{i, \hat \sigma}$ are suitable matrices/vectors that only depend on the trigger sequence $\hat \sigma$ of Player $i$. From this formulation, one can obtain a linear program for computing an EFCE optimizing over any linear function of $\xi$, following the steps that we have outlined in the secion about linear programming for NFCCE. We only give here the final LP.

\refstepcounter{equation}
\label{lp:efce}
\begin{equation*}
    \thetag\theequation :
    \mleft\{\begin{array}{rll}
        \max \!\!\!\! & \vec{c}^{\!\top}\!\vec{\xi} \\
        \text{s.t.} \!\!\!\! & u - \vec{v}_{i, \hat \sigma}^{\!\top} \vec{f}_{i} - w_{\hat \sigma} + \vec{b}_{i, \hat \sigma}^{\!\top} \vec{\xi} \ge 0 & \forall i,  \hat \sigma \! \in \! \seqs{i} \\[.5mm]
                        & \mat{F}_i,^{\!\top} \vec{v}_{i, \hat \sigma} + w_{\hat \sigma} - \mat{A}_{i, \hat \sigma}^{\!\top} \vec{\xi} \ge \vec{0} & \forall i,\, \hat \sigma \! \in \! \seqs{i} \\[.5mm]
                        & u \le 0\\[2mm]
                        & \vec{\xi} \in \Xi\\
                        & u \in \bbR\\
                        & w_{\hat \sigma} \in \bbR,\vec{v}_{i,\, \hat \sigma} \in \bbR^{|\seqs{i}|} & \forall i,\, \hat \sigma \! \in \! \seqs{i}.\\
    \end{array}\mright.
\end{equation*}

    \section{Proofs}

\propinclusionequilibria*
\begin{proof}
    We break the proof into two parts, which can be read independently.

    \vspace{2mm}
    \noindent\textbf{$\text{EFCE} \subseteq \text{EFCCE}$}\quad Let
    $\vec{\xi} \in \Xi$ be an EFCE. We need to show that, given any player
    $i$, decision point $\hat I \in \infos{i}$ and extensive-form strategy
    $\vec{y}_{\hat I} \in \seqf{i}$ such that $y_{\hat I}(\sigma_i(I)) = 1$,
    Inequality~\eqref{eq:xi incentive efcce} is satisfied. The crucial
    ingredient in the proof is the fact that for any $I\in\infos{i}$ and
    $z \in Z$,
    \[
        \xileaf{\sigma_i(I)}{z} = \sum_{\sigma_i \in I} \xileaf{\sigma_i}{z}
    \]
    by definition of $\Xi$~\eqref{eq:xi definition}. Hence,
    \begin{align}
        &\sum_{z\in Z_{\hat I}} u_i(z) \xi_(\sigma_i(\hat I), z) y_{\hat I}(\sigma_i(z)) \nonumber\\
        = &\sum_{z\in Z_{\hat I}}\sum_{\hat\sigma \in \hat I} u_i(z) \xileaf{\hat \sigma}{z} y_{\hat I}(\sigma_i(z)) \nonumber\\
        = &\sum_{\hat\sigma \in \hat I}\mleft[\sum_{z\in Z_{\hat I}} u_i(z) \xileaf{\hat\sigma}{z} y_{\hat I}(\sigma_i(z))\mright]. \tag{$\star$}\label{eq:inclusion proof 1}
    \end{align}
    Since $\vec{\xi}$ is an EFCE and $y_{\hat I}(\sigma_i(\hat I)) = 1$ by hypothesis, the quantity in square brackets in~\eqref{eq:inclusion proof 1} is upper bounded as in Inequality~\eqref{eq:xi incentive efce}. Hence we can write
    \begin{align*}
        \eqref{eq:inclusion proof 1} &\le \sum_{\hat\sigma \in \hat I} \sum_{z\in Z_{\hat\sigma}} u_i(z) \xileaf{\sigma_i(z)}{z}\\
        &= \sum_{z\in Z_{\hat I}} u_i(z) \xileaf{\sigma_i(z)}{z},
    \end{align*}
    where the second equality follows from the fact that the collection of sets $\{Z_{\hat\sigma}\}_{\hat\sigma \in \hat I}$ forms a partition of $Z_{\hat I}$. This shows that Inequality~\eqref{eq:xi incentive efcce} holds for any applicable choice of player $i$, decision point $\hat I$ and deviation strategy $\vec{y}_{\hat I}$, and therefore $\vec{\xi}$ is an EFCCE.

    \vspace{2mm}
    \noindent\textbf{$\text{EFCCE} \subseteq \text{NFCCE}$}\quad Let $\vec{\xi} \in \Xi$
    be an EFCCE. We need to show that, for any
    player $i$ and extensive-form deviation strategy $y_i \in \seqf{i}$,
    Inequality~\eqref{eq:xi incentive nfcce} is satisfied. To this end, let
    $\infos{i}^*$ be the set of \emph{initial} decision points for the
    given player, defined as all information sets that has an empty parent
    sequence; in symbols,
    $\infos{i}^* \defeq \{I \in \infos{i} : \sigma_i(I) = \emptyseq_i\}$.
    The collection of sets $\{Z_{I}\}_{I \in \infos{i}^*}$ is a partition
    of the set of all leaves $Z$. Hence, using the fact that
    $\sigma_i(I) = \emptyseq_i$ for all $I \in \infos{I}^*$:
    \begin{align*}
        &\sum_{z \in Z} u_i(z) \xileaf{\emptyseq_i}{\sigma_i(z}) y_i(\sigma_i(z)) \\
        = &\sum_{I \in \infos{i}^*} \sum_{z\in Z_I} u_i(z) \xileaf{\emptyseq_i}{z} y_i(\sigma_i(z))\\
        = &\sum_{I \in \infos{i}^*} \sum_{z\in Z_I} u_i(z) \xileaf{\sigma_i(I)}{z} y_i(\sigma_i(z))\\
        \le &\sum_{I \in \infos{i}^*} \sum_{z\in Z_I} u_i(z) \xileaf{\sigma_i(z)}{z}\\
        = &\sum_{z \in Z} u_i(z) \xileaf{\sigma_i(z)}{z},
    \end{align*}
    where the inequality follows from
    Inequality~\eqref{eq:xi incentive efcce}, which is applicable since
    $\vec{\xi}$ is an EFCCE by hypothesis, and
    $y(\sigma_i(I)) = y(\emptyseq_i) = 1$ by definition of $\seqf{i}$.
    This shows that Inequality~\eqref{eq:xi incentive nfcce} holds for any
    player $i$, and therefore $\vec{\xi}$ is an NFCCE.
\end{proof}

\propinclusionpayoffs*
\begin{proof}
    First, observe that the set of NFCCE is a convex polytope, since it is the intersection of $\Xi$ with Inequality~\ref{eq:xi incentive nfcce} for all relevant instantiations (i.e., all players $i$ in the case of NFCCE). (Equivalent statements hold for EFCCE and EFCE). Second, the function that maps a $\vec{\xi} \in \Xi$ to the tuple of expected payoffs (one for each player) under the assumptions that players do not deviate from the recommendation strategy encoded by $\vec{\xi}$, namely
\[
  \vec{\xi} \mapsto \begin{pmatrix}\vdots \\[.5mm] \sum_{z \in Z} u_i(z) \xileaf{\sigma_i(z)}{z} \\[-.1mm] \vdots \end{pmatrix}_{\!i \,\in\, \plset},
\]
is linear.
Since the image of a convex polytope with respect to a linear functions is a convex polytope, the first part of the statement follows. The second part of the statement follows trivially from \cref{prop:inclusion of equilibria}.
\end{proof}

\prophardnesstwoplchance*
\begin{proof}
	Given a SAT instance $(C, V)$ in disjunctive normal form, where $C$ is the set of clauses and $V$ the set of variables, we can build an extensive-form game as follows (see \cref{fig:sat_chance_tree} for an example of such a game tree):
	\begin{itemize}
		\item~at the start of the game, chance selects one action in the set $\{a_{\phi} | \phi \in C\}$ uniformly at random, that is it picks non-deterministically a clause in the SAT formula;
		\item~each action $a_{\phi}$ leads to a node $h_{\phi} \in I_{\phi}$ of Player 1, where he can choose an action to play in the set $\{a_{\phi,l} | l \in \phi\}$, that is it selects a literal in the clause $\phi$;
		\item~all actions $a_{\phi,l}$ with $l = v$ or $l = \bar{v}$ for some $v \in V$ lead to node $h_{\phi,l} \in I_v$ of Player 2, where he can choose an action to play in the set $\{a_v, a_{\bar{v}}\}$, that is it selects a truth assignment for variable $v$;
		\item~actions $\{a_v, a_{\bar{v}}\}$ lead to terminal nodes, where the players receive utility $(0,0)$ if $l = v$ and action $a_{\bar{v}}$ was played or if $l = \bar{v}$ and action $a_v$ was played, and they receive utility $(1,1)$ otherwise.
	\end{itemize}
	If the SAT formula is satisfiable, this game admits a pure Nash Equilibrium in which Player 2 plays the truth assignment that satisfies it and Player 1 selects a satisfiable literal for each clause; in this case, the expected utility for both player is $1$ and thus the social welfare is $2$. If the SAT formula is not satisfiable, then for any truth assignment there exist at least one clause for which no literal can evaluate to true, which means that any strategy profile will lead to at least one $(0,0)$ outcome and thus the social welfare will be strictly smaller than $2$. Since any Nash Equilibrium is trivially also an EFCCE, and since $2$ is the optimal outcome of the game, the pure Nash Equilibrium described above is also an EFCCE. Hence, any polynomial time algorithm for deciding whether it exists an EFCCE with social-welfare greater than $2$ could be employed to decide whether a SAT formula is satisfiable or not.
\end{proof}

\prophardnessthreepl*
\begin{proof}
	The proof is similar to the one for two players with chance, with the only difference that we now have to employ the introduction of a third player to simulate the random chance move. This requires only a small modification in the extensive-form game. In fact, it is sufficient to replace the player at the root node with Player 3, that can choose one action in the set $\{a_{\phi} | \phi \in C\}$. The game then proceeds unchanged, with Player 3 receiving as payoff $1-u_1$ where $u_1$ is the payoff received in the same terminal node by Player 1.

	If the SAT formula is satisfiable, then the game admits at least one Nash Equilibrium with a social welfare of $2$. Regardless of the strategy employed by Player 3, Player 1 can in fact always select one literal for each clause such that they all evaluate to true under the truth assignment played by Player 2. If the SAT formula is not satisfyable, then every Nash Equilibrium has a social welfare strictly smaller than $2$, as Player 3 is incentivized to play in a way as to reach as many non-satisfiable clauses, after which no strategy of Player 1 and $2$ can get anything better than a social welfare of $1$. Thus, the same argument of the two player with chance proof still holds, hence the SAT problem can be reduced to the $\text{SW}_{\text{EFCCE}}(\kappa)$ one.
\end{proof}

\begin{figure}
    \centering
    \includegraphics[width=\linewidth]{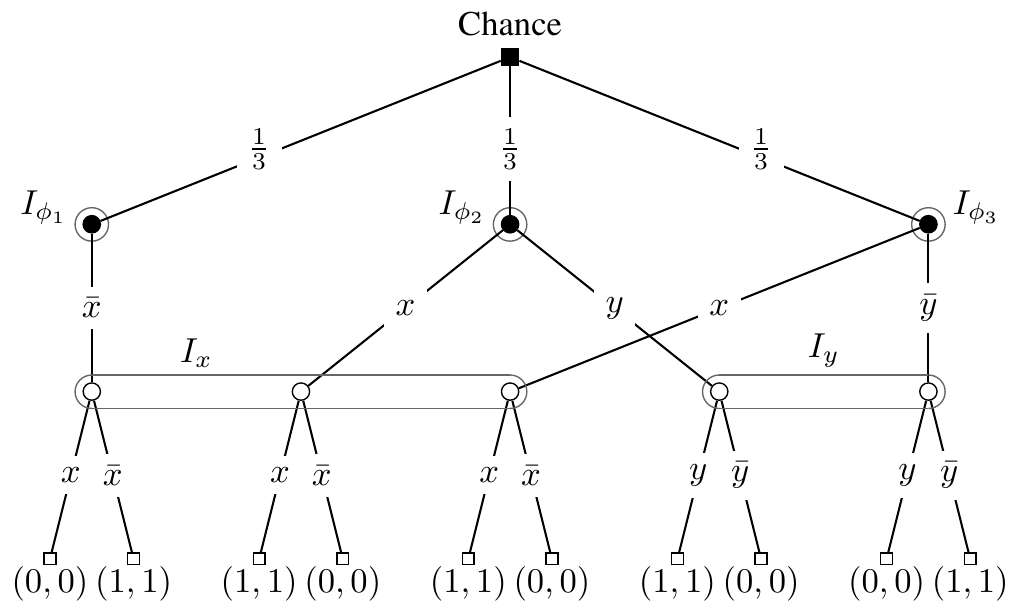}
    \caption{Two-player extensive-form game with Chance built from the SAT instance $C = \{\bar{x}, x \lor y, x \lor \bar{y}\}$.}
    \label{fig:sat_chance_tree}
\end{figure}

\begin{table*}[ht]
    \centering
    \includegraphics[scale=.81]{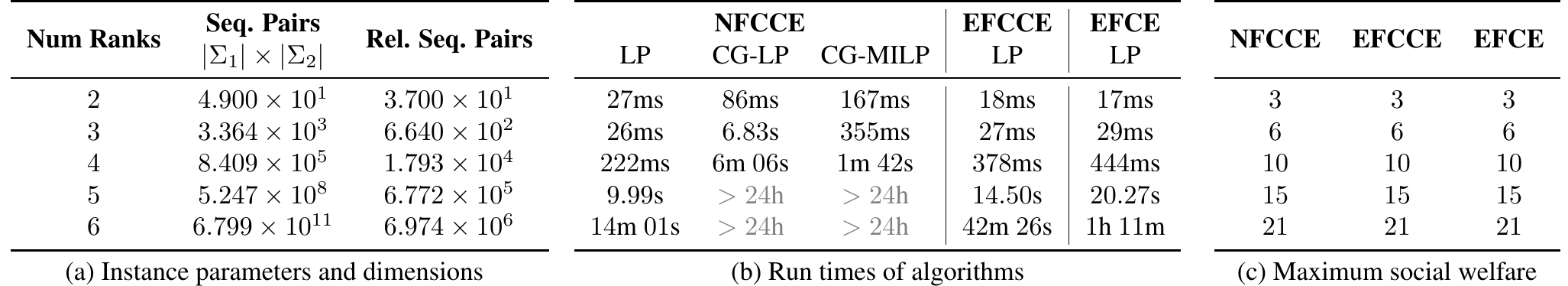}
    \caption{Experimental results on several instances of the Goofspiel game.}
    \label{table:goofspiel}
\end{table*}
    \section{Additional Figures}

    \begin{figure}[H]
        \centering
        \includegraphics[scale=.82]{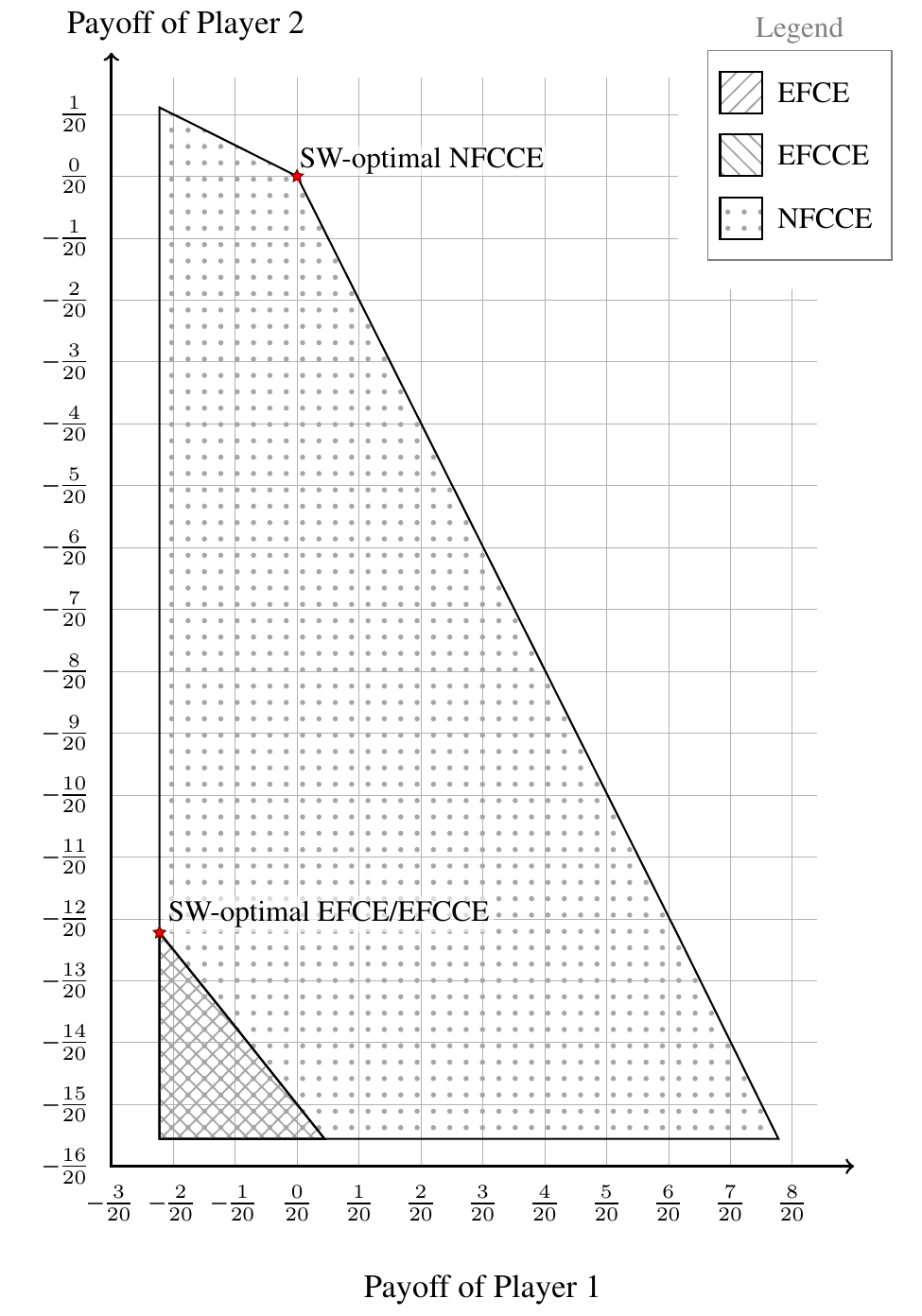}
        \caption{
            Payoff-space representation of a battleship game with a
            $2 \times 1$ board, one ship per player and two rounds.
            In this case,
            $U_\text{EFCE} = U_\text{EFCCE} \subsetneq U_\text{NFCCE}$.
        }
    \end{figure} 

    \section{Game Instances Used in Our Experiments}\label{appendix:exp}

For a detailed explanation of the games of Sheriff and Battleship, see the work by \citet{Farina19:Correlation} that introduced them. In this section, we will only briefly review the most important parameters that characterize them.

Where applicable, we will use the same symbols as~\citet{Farina19:Correlation} in the descriptions of the games below.

\subsection{Sheriff}
The Sheriff instances that we use in our experiments are parametric over the 
maximum number $n_\text{max}$ of illegal items that the Smuggler can load in his or her cargo, the maximum bribe $b_\text{max}$ that can be offered to the Sheriff, and the number of bargaining rounds $r$ between the two players. 
Increasing any of this parameters affects the size of the resulting game instance. The other parameters of the game are set to the fixed values $v = 5, p = 1, s = 1$ in all of our game instances.

\subsection{Battleship}
Our instances of the Battleship game are parametric on the grid size $(w,h)$, and the maximum number of rounds $r$ that players have. The loss multiplier $\gamma$ was set to the fixed value $2$ in all of our instances. Furthermore, each player has exactly one ship of length $1$ and value $1$.

\subsection{Goofspiel}
The variant of Goofspiel~\citep{Ross71:Goofspiel} that we use in our experiments is a two-player card game, employing three identical decks of $r$ cards each. At the beginning of the game, each player receives one of the decks to use it as its own hand, while the last deck is put face down between the players, with cards in increasing order of rank from top to bottom. Cards from this deck will be the prizes of the game. In each round, the players privately select a card from their hand as a bet to win the topmost card in the prize deck. The selected cards are simultaneously revealed, and the highest one wins the prize card. In case of a tie, the prize card is discarded. Each prize card's value is equal to its face value, and at the end of the game the players' score are computed as the sum of the values of the prize cards they have won.

\section{Additional Experimental Results}

See \cref{table:goofspiel} for experimental results on Goofspiel.
 
\fi

\end{document}